\def\;{{\hspace{0.3ex};\hspace{0.5ex}}}
\def\,{{\hspace{0,3ex},\hspace{0.5ex}}}
\def\({{\hspace{1.2ex}(}}
\def\rank{{\rm R}}
\def\C{{\mathbb{C}}}
\def\dim{{{\rm dim}}}
\def\rank{{{\rm rank}}}
\def\QED{\mbox{\rule[0pt]{1.5ex}{1.5ex}}}
\def\endproof{\hspace*{\fill}~\QED\par\endtrivlist\unskip}
\newtheorem{definition}{Definition}
\newtheorem{proposition}[definition]{Proposition}
\newtheorem{lemma}[definition]{Lemma}
\newtheorem{theorem}[definition]{Theorem}
\newtheorem{corollary}[definition]{Corollary}
\newtheorem{conjecture}[definition]{Conjecture}
\newtheorem{remark}[definition]{Remark}
\newtheorem{example}[definition]{Example}
\newtheorem{question}[definition]{Question}
\newtheorem{opprob}[definition]{Open Problems}
\newcommand{\nc}{\newcommand}
 \nc{\cBA}{\mathbb{A}} \nc{\cBB}{\mathbb{B}} \nc{\cBC}{\mathbb{C}}
 \nc{\cBD}{\mathbb{D}} \nc{\cBE}{\mathbb{E}} \nc{\cBF}{\mathbb{F}}
 \nc{\cBG}{\mathbb{G}} \nc{\cBH}{\mathbb{H}} \nc{\cBI}{\mathbb{I}}
 \nc{\cBJ}{\mathbb{J}} \nc{\cBK}{\mathbb{K}} \nc{\cBL}{\mathbb{L}}
 \nc{\cBM}{\mathbb{M}} \nc{\cBN}{\mathbb{N}} \nc{\cBO}{\mathbb{O}}
 \nc{\cBP}{\mathbb{P}} \nc{\cBQ}{\mathbb{Q}} \nc{\cBR}{\mathbb{R}}
 \nc{\cBS}{\mathbb{S}} \nc{\cBT}{\mathbb{T}} \nc{\cBU}{\mathbb{U}}
 \nc{\cBV}{\mathbb{V}} \nc{\cBW}{\mathbb{W}} \nc{\cBX}{\mathbb{X}}
 \nc{\cBZ}{\mathbb{Z}}
 \nc{\trans}{^\top}
\def\min{\mathop{\rm min}}
\nc{\cA}{{\cal A}} \nc{\cB}{{\cal B}} \nc{\cC}{{\cal C}}
\nc{\cD}{{\cal D}} \nc{\cE}{{\cal E}} \nc{\cF}{{\cal F}}
\nc{\cG}{{\cal G}} \nc{\bH}{{\cal H}} \nc{\cI}{{\cal I}}
\nc{\cJ}{{\cal J}} \nc{\cK}{{\cal K}} \nc{\cL}{{\cal L}}
\nc{\cM}{{\cal M}} \nc{\cN}{{\cal N}} \nc{\cO}{{\cal O}}
\nc{\cP}{{\cal P}} \nc{\cQ}{{\cal Q}} \nc{\cR}{{\cal R}}
\nc{\cS}{{\cal S}} \nc{\cT}{{\cal T}} \nc{\cU}{{\cal U}}
\nc{\cV}{{\cal V}} \nc{\cW}{{\cal W}} \nc{\cX}{{\cal X}}
\nc{\cZ}{{\cal Z}} \nc{\cY}{{\cal Y}}
\def\a{\alpha}
\def\b{\beta}
\def\ba{\mathbf{a}}
\def\cBb{\mathbf{b}}
\def\cBc{\mathbf{c}}
\def\cBd{\mathbf{d}}
\def\be{\mathbf{e}}
\def\bm{\mathbf{m}}
\def\bn{\mathbf{n}}
\def\bu{\mathbf{u}}
\def\bv{\mathbf{v}}
\def\bx{\mathbf{x}}
\def\by{\mathbf{y}}
\def\bz{\mathbf{z}}
\def\bH{\mathbf{H}}
\def\bU{\mathbf{U}}
\def\bW{\mathbf{W}}
\def\u{\upsilon}
\def\ox{\otimes}
\newcommand{\red}{\textcolor{red}}
\newcommand{\ket}[1]{|#1\rangle}
\newcommand{\abs}[1]{|#1|}
\nc{\U}{\mathrm{U}}
\def\bcj{\begin{conjecture}}
\def\ecj{\end{conjecture}}
\def\bcr{\begin{corollary}}
\def\ecr{\end{corollary}}
\def\bd{\begin{definition}}
\def\ed{\end{definition}}
\def\bea{\begin{eqnarray}}
\def\eea{\end{eqnarray}}
\def\bem{\begin{enumerate}}
\def\eem{\end{enumerate}}
\def\bex{\begin{example}}
\def\eex{\end{example}}
\def\bim{\begin{itemize}}
\def\eim{\end{itemize}}
\def\bl{\begin{lemma}}
\def\el{\end{lemma}}
\def\bma{\begin{bmatrix}}
\def\ema{\end{bmatrix}}
\def\bpf{\begin{proof}}
\def\epf{\end{proof}}
\def\bpp{\begin{proposition}}
\def\epp{\end{proposition}}
\def\bqu{\begin{question}}
\def\equ{\end{question}}
\def\br{\begin{remark}}
\def\er{\end{remark}}
\def\bt{\begin{theorem}}
\def\et{\end{theorem}}
\begin{document}

% the following line is for submission, including submission to the arXiv!!
%\hspace{5.2in} \mbox{Fermilab-Pub-04/xxx-E}

\title{The tensor rank of tensor product of two three-qubit W states is eight}

\author{Lin Chen}\email{linchen@buaa.edu.cn}
\affiliation{School of Mathematics and Systems Science, Beihang University, Beijing 100191, China}
\affiliation{International Research Institute for Multidisciplinary Science, Beihang University, Beijing 100191, China}
 \author{Shmuel Friedland}\email{friedlan@uic.edu}
 \affiliation{Department of Mathematics, Statistics and Computer Science,
 University of Illinois at Chicago, Chicago, Illinois 60607-7045, USA}

\date{\today}

\begin{abstract}
\red{We show that the tensor rank of tensor product of two three-qubit W states is not less than eight.  Combining this result with the recent result of M. Christandl, A. K. Jensen, and J. Zuiddam that the tensor rank of tensor product of two three-qubit W states is at most eight, we deduce that the tensor rank of tensor product of two three-qubit W states is eight.} We also construct the upper bound of the tensor rank of tensor product of many three-qubit W states.
\end{abstract}

\pacs{03.67.Hk, 03.67.-a}
%02.20.-a 	Group theory
%(for algebraic methods in quantum mechanics, see 03.65.Fd;
%for symmetries in elementary particle physics, see 11.30.-j)
%03.65.Ta Foundations of quantum mechanics; measurement theory
%(for optical tests of quantum theory, see 42.50.Xa)
%03.65.Ud Entanglement and quantum nonlocality
%(e.g. EPR paradox, Bell's inequalities, GHZ states, etc.)
%(for entanglement production in quantum information, see 03.67.Mn;
%for entanglement in Bose-Einstein condensates, see 03.75.Gg)
%03.65.Wj State reconstruction, quantum tomography
%03.67.-a 	Quantum information
%03.67.Dd 	Quantum cryptography
%03.67.Hk 	Quantum communication
%05.30.Jp 	Boson systems
%02.30.Hq 	Ordinary differential equations
%02.30.Nw 	Fourier analysis

% Uncomment for Submitted to journal title message

%\submitto{\NJP}

\maketitle

%\tableofcontents

MSC: 15A69; 15A72; 46A32; 46B28; 46M05; 47A80; 53A45

Keywords: tensor rank; qubit; W state; Kronecker product; tensor product

\section{Introduction}
\label{sec:int}
Let $\bH$ be an $n$-dimensional Hilbert space. 
We denote by a bold letter $\bx$ an element in $\bH$. 
For compactness of the exposition we adopt the following terminology.
A nonzero vector $\bx$ is called a state, while a normalized state is  a vector $\bx$ of norm one. 
For a positive integer $d>1$ a $d$-partite state is the Hilbert space $\bH=\bH_1\otimes\cdots\otimes\bH_d$, where $\dim\ \bH_i=n_i$ for $i\in[d]=\{1,\ldots,d\}$.  We denote $\otimes_{i=1}^d \bH_i=\bH$.
 %For simplicity of notation we denote the inner product in each $\bH_i$ by  $\langle\cdot,\cdot\rangle$ and the norm by $\|\cdot\|$.  
 In the case $\bH_1=\ldots=\bH_d$ we denote $\bH$ by $\otimes^d\bH_1$.
 An unentangled state is a rank one tensor  $\bx_1\otimes\cdots\otimes\bx_d$, where $\bx_i\ne \mathbf{0}, i\in[d]$.  We denote by a calligraphic  letter $\cX$ an element of $\otimes_{i=1}^d \bH_i$.  The rank of a state $\cX$, denoted by $\rank\ \cX$, is the minimal number $r$ in the decomposition of $\cX$ as a sum of  unentangled states
$\cX=\sum_{j=1}^r \otimes_{i=1}^d \bx_{i,j}$.  Thus $\rank\ \cX$ is a measurement of entanglement of a state.  There are other measure of entanglement of normalized states, as geometrical measure of entanglement \cite{wg2003,cxz2010} or the nuclear norm of $\cX$ \cite{dfl17}.

The entanglement of bipartite states, i.e. $d=2$, is well understood, since $\bH_1\otimes \bH_2$ can be identified with the space of $\dim\ \bH_1 \times \dim\ \bH_2$ matrices.  In this case $\rank\ \cX$ is the rank of the corresponding matrix, and the maximal value of this rank is $\min(\dim\ \bH_1, \dim\ \bH_2)$.  To emphasize that we are dealing with bipartite states, i.e. matrices, we will usualy denote by $X$ the matrix representing the bipartite state.
The first interesting case is the $3$-qubit states: $d=3, \dim\ \bH_1=\dim\ \bH_2=\dim\ \bH_3=2$.  There are two kinds of entangled states which can not be decomposed as a product of an unentangled state with a two qubit entangled state:  the $GHZ$ and $W$ states whose ranks are $2$ and $3$ respectively.
The closure of the orbit of $GHZ$ under the action of $GL(\C^2)\times GL(\C^2)\times GL(\C^2)$ is $\otimes^3\bH_1$, and its rank is two. The $W$ state has the maximum rank three.  We will usually denote the $W$ state by the tensor $\cW$.

We now consider another $d'$ partite state Hilbert space $\bH'=\otimes_{i'=1}^{d'} \bH_{i'}'$, where $\dim\ \bH_{i'}'=n'_{i'}, i'\in[d']$. We define two different tensor products of $\bH$ and $\bH'$.  The first product is the tensor product $\bH\otimes\bH'$.
It has the following physical interpretation.  The $d$ and $d'$ partite tensor products $\bH$ and $\bH'$ correspond to two sets of parties $\{P_1,\ldots,P_d\}$ and $\{Q_1,\ldots,Q_{d'}\}$.  Then $\bH\otimes\bH'$ corresponds to $d+d'$ party $\{P_1,\ldots,P_d,Q_1,\ldots,Q_{d'}\}$.  The second tensor product, which we call the Kronecker product, is defined as follows.  Assume that $d\le d'$.  (We can always achieve \red{this} by permuting the factors $\bH$ and $\bH'$.)  Then 
\[\bH\otimes_{K}\bH'=(\otimes_{i=1}^d (\bH_i\otimes \bH_i'))\otimes (\otimes_{i'=d+1}^{d'} \bH_{i'}').\] 
(If $d'=d$ the second tensor product is omitted.)  
The physical interpretation of the Kronecker product is as follows. The $d$ and $d'$ partite tensor products $\bH$ and $\bH'$ correspond to two sets of parties $\{P_1,\ldots,P_d\}$ and $\{P_1,\ldots,P_{d'}\}$ respectively.  Then $\bH\otimes_{K}\bH'$ corresponds to the party$\{P_1,\ldots,P_{d'}\}$ where each person $P_i$ has the space $\bH_i\otimes\bH_i'$ for $i\in[d]$.  For $i'>d$ the person $P_{i'}$ has the space $\bH_{i'}'$. Note that for $d=d'=2$ $\bH\otimes_K\bH'$ corresponds to the Kronecker product two matrix spaces.  Suppose that $\bH'=\bH$.  Then $\otimes^p \bH=\bH^{\otimes p}$ is $pd$ partite system corresponding to $p$ tensor products of $\bH$.  Furthermore, $\otimes^p_K \bH=\otimes_{i=1}^d(\otimes^p \bH_i)$.

Assume that $\cX\in\bH,\cY\in\bH'$ are two states.   
That is, the parties $\{P_1,\ldots,P_d\}$ and $\{Q_1,\ldots,Q_{d'}\}$ each share the state $\cX$ and $\cY$ respectively.  Then the two parties together share the state $\cX\otimes \cY$.
The rank of $\cX\otimes \cY$ is $\rank\ \cX\otimes \cY$.  Clearly,
$\rank\ \cX\otimes \cY\le (\rank\ \cX)(\rank\ \cY)$.  The tensor $\cX\otimes_K\cY\in \bH\otimes_K\bH'$ has the following physical interpretation.  In the party $\{P_1,\ldots,P_{d'}\}$
the person $P_i$ has part $i$ of $\cX$ and $\cY$ for $i\in[d]$, while the person $P_{i'}$
has only part $i'$ of $\cY$ for $i'>d$.  It is straightforward to show that $\rank\ \cX\otimes \cY\ge \rank\ \cX\otimes _K\cY$.  In particular, 
$\rank\ \cX\otimes_K\cY\le (\rank\ \cX)(\rank \cY)$.

Assume that $\bH=\bH'$.  Then $\otimes^p\cX=\cX^{\otimes p}\in \otimes^p\bH$ and $\otimes^p_K\cX\in\otimes_K^p\bH$.   Thus we have the inequalities
\begin{equation}\label{powrankin}
\rank\ \cX\otimes_K\cY\le\rank\ \cX\otimes \cY\le (\rank\ \cX)(\rank\ \cY), \quad\rank\ \otimes^p_K\cX\le \rank\ \otimes^p\cX\le (\rank\ \cX)^p.
\end{equation} 

These notions and operations have been applied to various problems in quantum information theory such as the conversion of multipartite state \red{\cite{cds08},\cite{ycg10},\cite{cds10},\cite{ccd2010}.  In these papers the authors consider the rank of tensors under the Kronecker product, (which they call the rank of the tensor product). It is shown in \cite{cds08} that 
\[\rank\ \cW\otimes_K\cW\le 8 < (\rank\ \cW)^2=9.\] 
That is, unlike for tensor product of matrices, the tensor rank is not multiplicative under the tensor Kronecker product. In \cite{ycg10},\cite{cds10} it is shown that $\rank\ \cW\otimes_K\cW= 7$.} Very recently it has been proved that the tensor rank is also not multiplicative under the tensor product \cite{cjz17}. In particular, authors in \cite{cjz17} have shown that the tensor product of two three-qubit W states has tensor rank at most eight. In this note we show that it is exactly eight.

%We introduce the following notations. Let $U_i,V_i$ be finite-dimensional vector spaces over the complex number field $\cBC$. We shall refer to $\rank(t)$ as the tensor rank (or rank for convenience) of a $k$-tensor $t\in U_1\ox U_2\ox \cdots \ox U_k$, and refer to $t$ as a simple tensor when $\rank(t)=1$. We refer to the tensor product of a $k$-tensor $t$ and an $l$-tensor $s\in V_1\ox \cdots \ox V_l$ as a $(k+l)$-tensor $t\ox s$. For $k=l$, we further define the tensor Kronecker product of $t$ and $l$ as another $k$-tensor in the space $(U_1\ox V_1)\ox\cdots\ox(U_k\ox V_k)$. Let $I$ be the identity matrix. For a vector $x$ we refer to $x^*$ as the conjugate transpose of $x$, and $x^\perp$ as a vector orthogonal to $x$. We say that an $n$-tensor is symmetric if it is unchanged under any permutation of systems. Two $n$ tensors $X$ and $Y$ are in the same orbit, or equivalent when there exists an invertible product operator $Q=Q_1\ox \cdots \ox Q_n$ such that $QX=Y$. For simplicity we write $X\sim Y$.

%It is known that an $n$-tensor $t$ can be regarded as an $n$-partite pure state $\ket{t}$ in quantum physics. We shall refer to $t_{j_1,\cdots,j_k}$ as the $k$-partite reduced density operator of $\proj{t}$ on the space $U_{j_1}\ox \cdots \ox U_{j_k}$. We shall refer to $\cR(t_{j_1,\cdots,j_k})$ as the range of $t_{j_1,\cdots,j_k}$.

The rest of this paper is organized as follows.  In Section \ref{sec:prelim} we explain our notations and recall some well known results for the rank of $3$-tensors.  First we recall Kruskal's theorem which gives a sufficient condition for uniqueness of rank decomposition of $3$-tensor \cite{KRUSKAL197795}.  Second we recall   \emph{Strassen's direct sum conjecture} \cite{Strassen1973Vermeidung}.  A special case of this conjecture was proven by Ja'Ja'-Takche \cite{jaja86}.  We state a restricted version of Strassen's conjecture and prove it in special cases using the results of  \cite{jaja86}.  Our main result of this section is Theorem \ref{rankG2W} where we prove the equality $\rank\ X\otimes_K\cW=6$ for a $2\times 2$ matrix $X$ of rank two.  \red{Theorem 5 has been independently obtained in \cite[version 2]{cjz17}.}  In Section \ref{sec:w2} we prove our main result: $\rank\ \cW\otimes\cW=8$, (Theorem \ref{thm:w2}).  Its proof follows from Proposition \ref{pp:2x3=6} which analyze the rank six decomposition of $X\otimes_K\cW$, where $\rank\ X=2$,  \red{and is based on the substitution method.}
We investigate the rank of \red{$\cW^{\ox n}$} in Section \ref{sec:wn}.  In Section \ref{sec:openprob} we list open problems related to our paper.
\section{Preliminary results}\label{sec:prelim}
Let $\bH$ be an $n$-dimensional Hilbert space with the inner product $\langle\bx,\by\rangle$ and the norm $\|\bx\|=\sqrt{\langle\bx,\bx\rangle}$.  Choose an orthonormal basis $\be_1,\ldots,\be_n$ in $\bH$.  Then $\bx=\sum_{i=1}^n x_i\be_i$ and we can identify $\bH$ with $\C^n$, where $\bx$ corresponds to $(x_1,\ldots,x_n)\trans\in\C^n$.
We denote $\bx=(x_1,\ldots,x_n)\trans$, and identify the inner product in $\bH$ with the standard inner product $\by^*\bx$ in $\C^n$, where $\by^*=(\bar y_1,\ldots,\bar y_n)$.
Let $\bH_i$ be Hilbert space of dimension $n_i$ for $i\in[d]$.  We identify  $\otimes_{i=1}^d\bH_i$ with $\otimes_{i=1}^d \C^{n_i}$.  Denoting the standard orthonormal basis of $\C^{n_i}$ by $\be_{1,i},\ldots,\be_{n_i,i}$, we obtain the  
elements of the standard basis of $\otimes_{i=1}^d\C^{n_i}$: $\otimes_{i=1}^d \be_{j_i,i}$, where $j_i\in[n_i]$ and $i\in[d]$.  Let $\cX\in\otimes_{i=1}^d\C^{n_i}$.
Then
\[\cX=\sum_{j_i\in[n_i],i\in[d]} x_{j_1,\ldots,j_d}\otimes_{i=1}^d \be_{j_i,i}.\]
Thus $\cX$ is represented by $d$-multiarray $[x_{j_1,\ldots,j_d}]$.  The space of the multiarrays is denoted by $\C^{\bn}=\C^{n_1\times\cdots\times n_d}$, where $\bn=(n_1,\ldots,n_d)$.  We will identify $\otimes_{i=1}^d\C^{n_i}$ with $\C^{\bn}$.  Assume that $n_1=\cdots=n_d=n$.  A tensor $\cX=[x_{j_1,\ldots,j_d}]\in\otimes^d\C^n$ is called \emph{symmetric} if $x_{j_1,\ldots,j_d}=x_{j_{\sigma(1)},\ldots,j_{\sigma(d)}}$ for any permutation $\sigma$ of the set $[d]$.  We denote by $S^{d,n}\subset \otimes^d\C^n$
the space of symmetric $d$-tensors on $\C^n$.  Recall that the spaces $\otimes^d\C^2\supset S^{d,2}$ are called the spaces of $d$-qubits and $d$-symmetric qubits respectively.

Let $GL(\C^n)$ be the general linear group acting on $\C^n$.  Denote $GL(\bn)=GL(\C^{n_1})\times\cdots\times GL(\C^{n_d})$.  Then $GL(\bn)$ acts on the space $\C^{\bn}$ as the following subgroup of $GL(\C^{N})$, where $N=n_1\cdots n_d$.  Namely,  $\otimes_{i=1}^d A_i= A_1\otimes\cdots\otimes A_d\in GL(\bn)$ acts on rank one tensor as follows: $(\otimes_{i=1}^d A_i)\otimes_{i=1}^d \bx_i=\otimes_{i=1}^d (A_i\bx_i)$.
Two tensors $\cX,\cY\in\C^{\bn}$ are called equivalent if $\cY=(\otimes_{i=1}^d A_i)\cX$.
If $n_1=\cdots=n_d$ and $A_i=A$ for $i\in [d]$ we denote $\otimes_{i=1}^d A_i$ by $\otimes^d A=A^{\otimes d}$.  Note that if $\cX\in S^{d,n}$ then $(\otimes^d A)\cX\in S^{d,n}$.

We first recall the well known characterization of the $\rank\ \cX$ for a $d$-tensor $\cX\in\C^{\bn}$.  See for example Proposition 2.1 in \cite{FRIEDLAND2012478} for the case $d=3$.
\begin{lemma}\label{charrnakten}
Let $\cX\in\otimes_{i=1}^d \C^{n_i}$.  Write $\cX=\sum_{j=1}^{n_1} \be_{j,1}\otimes\cX_j$, where $\cX_j\in \otimes_{i=2}^{d} \C^{n_i}$.  Denote by $\bW$
the subspace spanned by $\cX_1,\ldots,\cX_{n_1}$ in $\otimes_{i=2}^{d} \C^{n_i}$.
Then $\rank\ \cX$ is the dimension of a minimal subspace of $\otimes_{i=2}^{d} \C^{n_i}$ spanned by rank one tensors that contains $\bW$.  In particular, $\rank\ \cX\ge \dim\bW$.
\end{lemma}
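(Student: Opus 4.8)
The plan is to establish the stated equality by proving two matching inequalities, from which the bound $\rank\ \cX\ge\dim\bW$ follows immediately. Throughout I write $m^*$ for the minimal dimension of a subspace of $\otimes_{i=2}^{d}\C^{n_i}$ that contains $\bW$ and admits a spanning set consisting of rank one tensors. Such subspaces exist (the whole space $\otimes_{i=2}^{d}\C^{n_i}$ qualifies, being spanned by the standard product basis), so the minimum is taken over a nonempty set of nonnegative integers and $m^*$ is well defined, with $m^*\ge\dim\bW$ since every admissible subspace contains $\bW$.

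First I would show $\rank\ \cX\le m^*$. Let $U$ be an admissible subspace of dimension $m^*$, with a basis $\bu_1,\ldots,\bu_{m^*}$ of rank one tensors $\bu_k=\otimes_{i=2}^{d}\bx_{i,k}$. Since each slice $\cX_j$ lies in $\bW\subseteq U$, I may write $\cX_j=\sum_{k=1}^{m^*}c_{jk}\bu_k$. Substituting into $\cX=\sum_{j=1}^{n_1}\be_{j,1}\otimes\cX_j$ and interchanging the order of summation gives $\cX=\sum_{k=1}^{m^*}\bv_k\otimes\bu_k$ with $\bv_k=\sum_{j=1}^{n_1}c_{jk}\be_{j,1}\in\C^{n_1}$. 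Each nonzero term $\bv_k\otimes\bu_k=\bv_k\otimes(\otimes_{i=2}^{d}\bx_{i,k})$ is an unentangled state, so after discarding the vanishing terms I obtain $\cX$ as a sum of at most $m^*$ rank one tensors, whence $\rank\ \cX\le m^*$.

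For the reverse inequality $m^*\le\rank\ \cX$, I would start from an optimal rank decomposition $\cX=\sum_{k=1}^{r}\otimes_{i=1}^{d}\bx_{i,k}$ with $r=\rank\ \cX$. Expanding the first factor as $\bx_{1,k}=\sum_{j}a_{jk}\be_{j,1}$ and setting $\bu_k=\otimes_{i=2}^{d}\bx_{i,k}$, the same reindexing yields $\cX=\sum_{j=1}^{n_1}\be_{j,1}\otimes(\sum_{k=1}^{r}a_{jk}\bu_k)$. Matching this against $\cX=\sum_{j}\be_{j,1}\otimes\cX_j$ and using the linear independence of $\be_{1,1},\ldots,\be_{n_1,1}$ forces $\cX_j=\sum_{k=1}^{r}a_{jk}\bu_k$ for every $j$. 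Hence every $\cX_j$, and therefore $\bW$, lies in $U:=\lin\{\bu_1,\ldots,\bu_r\}$, a subspace spanned by rank one tensors of dimension at most $r$. Thus $m^*\le\dim U\le r=\rank\ \cX$.

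Combining the two inequalities gives $\rank\ \cX=m^*$, and since every admissible subspace contains $\bW$ we conclude $\rank\ \cX=m^*\ge\dim\bW$, which is the final assertion. I do not expect a genuine obstacle: the argument is a bookkeeping translation between rank one decompositions of $\cX$ and rank one spanning sets of subspaces containing the slice span $\bW$. The only points that deserve a little care are that the minimum defining $m^*$ is actually attained (clear, as it ranges over bounded integers) and that degenerate terms with $\bv_k=\mathbf 0$ must be dropped before reading off the rank in the first inequality.
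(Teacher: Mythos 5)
Your proof is correct and is exactly the standard argument: the paper itself does not prove Lemma \ref{charrnakten} but delegates it to a citation (Proposition 2.1 of \cite{FRIEDLAND2012478}), and the proof given there proceeds by the same two-inequality bookkeeping between rank one decompositions of $\cX$ and rank one spanning sets of subspaces containing the slice span $\bW$. Both directions of your argument are sound, and you correctly flag the only minor points of care (extracting a rank one basis from a rank one spanning set, and discarding terms with $\bv_k=\mathbf{0}$).
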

Note that by changing the factor $\C^{n_i}$ with $\C^{n_1}$ we can apply Lemma \ref{charrnakten} to any $d-1$ factors: $(\otimes_{j=1}^{i-1}\C^{n_j})\otimes (\otimes_{j=i+1}^d \C^{n_j})$.

We  next recall Kruskal's theorem for $3$-tensors \cite{KRUSKAL197795}. Let $\bx_1,\ldots,\bx_p\in\C^q$.  Then the Kruskal rank of $\{\bx_1,\ldots,\bx_p\}$, denoted krank$(\bx_1,\ldots,\bx_p)$, is the maximal number $k$ such that any $k$ vectors in $\{\bx_1,\ldots,\bx_p\}$ are linearly independent.  
Assume that $\cX\in \C^{l}\otimes\C^m\otimes \C^n$, and we are given its decomposiiton in terms of rank one tensors:
\begin{equation}\label{Xdecomp}
\cX=\sum_{i=1}^r \bx_i\otimes \by_i\otimes \bz_i.
\end{equation}
Suppose that
\begin{equation}\label{Kruskalcon}
\mathrm{krank} (\bx_1,\ldots,\bx_r)+\mathrm{krank} (\by_1,\ldots,\by_r)+\mathrm{krank} (\bz_1,\ldots,\bz_r)\ge 2r+2.
\end{equation}
The $r=\rank\ \cX$ and the decomposiiton \eqref{Xdecomp} is unique.   That is, the rank one tensors $\bx_i\otimes \by_i\otimes \bz_i, i\in [r] $ are unique and linearly independent.  (One can change the order of the summation in \eqref{Kruskalcon}.)
It is possible to generalize Kruskal's theorem to $d$-partite tensors for $d>3$ by looking at these tensors as $3$-partitite tensors as in \cite{friedland16}.

We now state \emph{Strassen's direct sum conjecture} \cite{Strassen1973Vermeidung}.
Assume that $\cS\in\C^{\bm}, \cT\in\C^{\bn}$, where $\bm=(m_1,\ldots,m_d), \bn=(n_1,\ldots,n_d)$.  Then $\cS\oplus \cT$ is viewed as a tensor in $\C^{\bm+\bn}$. Clearly,
$
\rank\ (\cS\oplus \cT)\le\rank\ \cS + \rank\ \cT.
$
Strassen's direct sum conjecture states
\begin{equation}\label{StrDScon}
\rank\ (\cS\oplus \cT)=\rank\ \cS + \rank\ \cT.
\end{equation}
For $d=2$ (matrices) \eqref{StrDScon} holds. For $d=3$ equality holds if either $2\in\{m_1,m_2,m_3\}$ or $2\in\{n_1,n_2,n_3\}$, see \cite{jaja86}.

Otherwise, the conjecture is widely open.  Note that \eqref{StrDScon} fails for the border rank.  See A. Sch$\ddot{\text{o}}$nhage  counterexample in Example 4.5.2. of \cite{B2010Algebraic}.
Observe that $\cS\otimes_K \cT$ is an element of $\C^{\bm\circ\bn}$, where $\bm\circ \bn=(m_1n_1,\ldots,m_dn_d)$.

Denote by $\oplus^k\cT$ the direct sum of $k$-copies $\cT\in\C^{\bn}$.  Then \emph{restricted Strassen's} $k$-direct sum conjecture is  
\begin{equation}\label{StrDSconk}
\rank (\oplus^k \cT)=k\cdot\rank\cT.
\end{equation}
Clearly, if $ \rank\ (\oplus^\ell \cT)<\ell\rank\ \cT$ then $\rank\ (\oplus^k \cT)<k\cdot\rank\ \cT$ for each $k\ge\ell$. 

Denote
$\cG(k,d)=\sum_{i=1}^k \otimes^d \be_i\in \otimes^d\C^k$,
where $\be_1,\ldots,\be_k$ is the standard basis in $\C^k$. Clearly, $\rank\ \cG(k,d)=k$.
(For $d\ge 3$ one can use Kruskal's theorem by viewing $\cG(k,d)$ as a $3$-tensor as in \cite{friedland16}.)  Note that $\cG(2,3)$ is the GHZ state.  The following lemma is deduced straightforward.
\begin{lemma}\label{krestrlem}  Let $\cT\in\C^{\bn}$.  Then \eqref{StrDSconk} holds if and only if
\begin{equation}\label{krestrlem1}
\rank\ (\cG(k,d)\otimes_K \cT)= (\rank\ \cG(k,d))(\rank\ \cT)=k \cdot \rank\ \cT.
\end{equation}
Furthermore, if the above equality holds then 
\begin{equation}\label{krestrlem2}
\rank (\cG(k,d)\otimes \cT)=k\cdot\rank\cT.
\end{equation}
\end{lemma}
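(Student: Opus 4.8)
The plan is to recognize that the Kronecker product $\cG(k,d)\otimes_K\cT$ is, up to the obvious relabelling of indices, identical to the direct sum $\oplus^k\cT$; once this is seen, both assertions of the lemma drop out of facts already recorded in the excerpt.

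First I would set up the index identification. Both $\oplus^k\cT$ and $\cG(k,d)\otimes_K\cT$ live in $\C^{kn_1\times\cdots\times kn_d}$, and I would write each index of the $j$-th factor $\C^{kn_j}$ as a pair $(i_j,a_j)$ with $i_j\in[k]$ and $a_j\in[n_j]$. In these coordinates the direct sum has entry $(\oplus^k\cT)_{(i_1,a_1),\ldots,(i_d,a_d)}$ equal to $\cT_{a_1,\ldots,a_d}$ when $i_1=\cdots=i_d$ and equal to $0$ otherwise, since the $k$ summands occupy disjoint diagonal blocks. On the other hand the Kronecker product factorizes entrywise as $(\cG(k,d))_{i_1,\ldots,i_d}\cdot\cT_{a_1,\ldots,a_d}$, and because $\cG(k,d)=\sum_{i=1}^k\otimes^d\be_i$ the coefficient $(\cG(k,d))_{i_1,\ldots,i_d}$ equals $1$ exactly when $i_1=\cdots=i_d$ and vanishes otherwise. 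Hence the two tensors agree entry by entry, so $\rank(\oplus^k\cT)=\rank(\cG(k,d)\otimes_K\cT)$.

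With this equality of ranks in hand, the first claim is immediate: condition \eqref{StrDSconk} reads $\rank(\oplus^k\cT)=k\cdot\rank\cT$, which is exactly \eqref{krestrlem1} once we substitute $\rank\cG(k,d)=k$. For the ``furthermore'' part I would invoke the chain of inequalities \eqref{powrankin} with $\cX=\cG(k,d)$ and $\cY=\cT$, namely
\[\rank(\cG(k,d)\otimes_K\cT)\le\rank(\cG(k,d)\otimes\cT)\le(\rank\cG(k,d))(\rank\cT)=k\cdot\rank\cT.\]
If \eqref{krestrlem1} holds then the leftmost term already equals $k\cdot\rank\cT$, so the middle term is squeezed between two equal values and \eqref{krestrlem2} follows.

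I do not expect a serious obstacle here. The only step demanding any care is the block/index bookkeeping in the identification of $\cG(k,d)\otimes_K\cT$ with $\oplus^k\cT$---in particular checking that the support of $\cG(k,d)$, the ``diagonal'' multi-indices $i_1=\cdots=i_d$, matches the diagonal placement of the $k$ copies of $\cT$ in the direct sum. Everything after that is a formal consequence of $\rank\cG(k,d)=k$ and the inequalities in \eqref{powrankin}.
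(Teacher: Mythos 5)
Your proof is correct and follows exactly the route the paper intends: the paper offers no written proof (it says the lemma "is deduced straightforward"), and the intended argument is precisely your identification of $\cG(k,d)\otimes_K\cT$ with $\oplus^k\cT$ via the support of $\cG(k,d)$ on the diagonal multi-indices, followed by the squeeze from \eqref{powrankin}. Nothing is missing.
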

The result of JaJa-Takche \cite{jaja86} applied recursively to $(\oplus_{j=1}^p \cT_j)\oplus \cT_{p+1}$, \eqref{powrankin} and the above observations yield:
\begin{corollary}
	\label{JaTcor}  Assume that $\cT_1,\cdots,\cT_m\in \C^\bn$, where $\bn=(n_1,n_2,n_3)$.
Suppose that $2\in \{n_1,n_2,n_3\}$ and $\be_1,\ldots,\be_m$ is the standard basis in $\C^m$.  Then 
\begin{equation}
\rank\ \bigg(\sum^m_{j=1} (\otimes^3\be_j)\ox \cT_j\bigg)
=
\rank\ \bigg(\sum^m_{j=1} (\otimes^3\be_j)\ox_K \cT_j\bigg)
=
\sum^m_{j=1}\rank\ \cT_j.
\end{equation}
In particular, assume that $\cT\in\C^{\bn}$, where $\bn=(n_1,n_2,n_3)$ and $2\in\{n_1,n_2,n_3\}$.  Then
\begin{equation}\label{prodGk3id}
\rank\ (\cG(m,3)\otimes \cT)=\rank\ (\cG(m,3)\otimes_K \cT)=m\cdot\rank\ \cT.
\end{equation}
\end{corollary}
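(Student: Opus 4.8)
The plan is to reduce both equalities to the additivity of tensor rank under direct sums, which Ja'Ja'-Takche \cite{jaja86} guarantees whenever one summand carries a factor of dimension two. First I would observe that the Kronecker expression is literally a direct sum: expanding $\otimes^3\be_j=\be_j\ox\be_j\ox\be_j$ and regrouping the factors according to the definition of $\ox_K$, each $(\otimes^3\be_j)\ox_K\cT_j$ is supported in the diagonal block indexed by $(j,j,j)$ of the natural $m\times m\times m$ block decomposition of $\C^{mn_1}\ox\C^{mn_2}\ox\C^{mn_3}$. Since distinct $j$ occupy disjoint diagonal blocks,
\[
\sum_{j=1}^m(\otimes^3\be_j)\ox_K\cT_j=\bigoplus_{j=1}^m\cT_j,
\]
so the middle quantity in the corollary equals $\rank\big(\bigoplus_{j=1}^m\cT_j\big)$.

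Next I would prove $\rank\big(\bigoplus_{j=1}^m\cT_j\big)=\sum_{j=1}^m\rank\ \cT_j$ by induction on $m$, peeling off one summand at a time. The case $m=1$ is trivial. For the step, write $\bigoplus_{j=1}^{p+1}\cT_j=\cS\oplus\cT_{p+1}$ with $\cS=\bigoplus_{j=1}^p\cT_j$ and apply \eqref{StrDScon}: the second summand $\cT_{p+1}$ lies in $\C^{\bn}$ with $2\in\{n_1,n_2,n_3\}$, so the dimension-two hypothesis of Ja'Ja'-Takche is met by $\cT_{p+1}$ irrespective of the dimensions of the accumulated summand $\cS$, giving $\rank(\cS\oplus\cT_{p+1})=\rank\ \cS+\rank\ \cT_{p+1}$. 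The inductive hypothesis then finishes the step, establishing the value $\sum_{j=1}^m\rank\ \cT_j$ for the Kronecker expression.

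It remains to transfer this to the tensor product. Put $A=\sum_{j=1}^m(\otimes^3\be_j)\ox\cT_j$ and $B=\sum_{j=1}^m(\otimes^3\be_j)\ox_K\cT_j$. The linear map that groups each $m$-dimensional factor with the corresponding $n_i$-dimensional factor sends $A$ to $B$ and carries rank-one $6$-tensors to rank-one $3$-tensors, hence cannot increase rank; this is the mechanism behind the first inequality of \eqref{powrankin} and yields $\rank\ B\le\rank\ A$. Conversely, each summand $(\otimes^3\be_j)\ox\cT_j$ has rank at most $(\rank\ \otimes^3\be_j)(\rank\ \cT_j)=\rank\ \cT_j$ by \eqref{powrankin}, so subadditivity gives $\rank\ A\le\sum_{j=1}^m\rank\ \cT_j$. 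Combining,
\[
\sum_{j=1}^m\rank\ \cT_j=\rank\ B\le\rank\ A\le\sum_{j=1}^m\rank\ \cT_j,
\]
so equality holds throughout, proving the displayed identity. The ``In particular'' statement is then the case $\cT_1=\cdots=\cT_m=\cT$, using $\sum_{j=1}^m(\otimes^3\be_j)\ox\cT=\cG(m,3)\ox\cT$ and the analogous identity for $\ox_K$.

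The step I expect to require the most care is ensuring that, at every stage of the induction, the dimension-two hypothesis of Ja'Ja'-Takche is attached to the correct summand: it must be the single tensor $\cT_{p+1}$, not the growing direct sum $\cS$ (whose dimensions $(pn_1,pn_2,pn_3)$ need contain no $2$), that carries the factor of size two. This is precisely why the induction peels off one $\cT_j$ at a time rather than splitting the sum in half, and it is the only place where the structural hypothesis $2\in\{n_1,n_2,n_3\}$ is genuinely used.
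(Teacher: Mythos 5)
Your proposal is correct and follows essentially the same route the paper indicates: identify $\sum_j(\otimes^3\be_j)\ox_K\cT_j$ with the direct sum $\oplus_j\cT_j$, apply the Ja'Ja'--Takche additivity recursively to $(\oplus_{j=1}^p\cT_j)\oplus\cT_{p+1}$ (correctly attaching the dimension-two hypothesis to the peeled-off summand $\cT_{p+1}$), and then sandwich the tensor-product rank between the Kronecker rank and $\sum_j\rank\ \cT_j$ via \eqref{powrankin} and subadditivity. No gaps; the squeeze argument at the end is exactly the mechanism behind Lemma \ref{krestrlem} and the paper's one-line derivation.
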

Recall Strassen's algorithm \cite{Strassen1969}, which states that  the product of $2\times 2$ matrices can be performed in $7$ mulitplications. It is known that the product of $2\times 2$ matrices can't be performed in $6$ multiplications \cite{hk71, Winograd1971On}. 
It is well known that the optimality of Strassen's algorithm
follows from the fact that the rank of the corresponding $3$-tensor $\cA=[a_{p,q,r}]\in \otimes^3\C^4$ is $7$.   The $64$ entriies of $a_{p,q,r}$ are either $0$ or $1$.  Furthermore there are $8$ entries which are equal to $1$.  It is easier to present $\cA$ using the Dirac notation \emph{bra-ket}.   View $\C^2\otimes \C^2$ as the space of $2\times 2$ matrices $\C^{2\times 2}$.  The standard basis in this space is $\be_i\otimes \be_j$, corresponding the matrices $\be_i\be_j\trans$ for $i,j\in[2]$.  In the bra-ket notation 
$\be_i\otimes \be_j$ corresponds to $|(i-1)(j-1)\rangle$.  To make transition to $\C^4$ we make the identification
\begin{equation}\label{conver2to1}
|00\rangle=|0\rangle,\quad |01\rangle=|1\rangle,\quad |10\rangle=|2\rangle, \quad |11\rangle=|3\rangle.
\end{equation}
Hence $|b\rangle$, where $b+1\in[4]$, represents an element in the basis $|st\rangle$, where $s,t\in\{0,1\}$. Thus, $a_{p,q,r}=1$ if and only if the product if $(\be_i\be_j\trans)(\be_{i'}\be_{j'}\trans)=\be_{\tilde i}\be_{\tilde j}\trans$, that is $i'=j, \tilde i=i, \tilde j=j'$.
Thus in bra-ket notation we have that $\cA=\sum_{i,j,k=0}^1|ij\rangle|jk\rangle |ik\rangle$.  By considering the isomorphism $|ik\rangle \mapsto |ki\rangle$  of $\C^2 \otimes \C^2$ we deduce that $\rank\ \cA=\rank\ \cB=7$, where
\bea
 \cB &=&
 \sum^1_{i,j,k=0} \ket{ij}\ket{jk}\ket{ki}
 \notag\\
 &=&
 \ket{000}+\ket{012}+\ket{120}+\ket{201}+\ket{321}+\ket{213}+\ket{132}+\ket{333}.
 \eea
\begin{lemma}\label{7rankex}  Let $\cT\in \C^2\otimes \C^2\otimes\C^4$ be the following tensor in Dirac's bra-ket notation $\cT=\ket{000}+\ket{012}+\ket{101}+\ket{113}$.   Then $\rank\ \cT=4$.  Furthermore
\begin{equation}\label{7rankeq}
\rank (\cG(2,2)\otimes_K \cT)=\rank \bigg( (\ket{00}+\ket{11})\ox_K \cT \bigg)=7.	
\end{equation}
\end{lemma}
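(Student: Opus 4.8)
The plan is to establish the two assertions separately, reducing each to a fact already available in the excerpt. For $\rank\ \cT=4$ the upper bound is immediate: the displayed expression $\cT=\ket{000}+\ket{012}+\ket{101}+\ket{113}$ already writes $\cT$ as a sum of four unentangled tensors, so $\rank\ \cT\le 4$. For the matching lower bound I would apply Lemma \ref{charrnakten}, slicing $\cT$ along the third factor $\C^4$. The four slices are $\ket{00},\ket{10},\ket{01},\ket{11}$, which span all of $\C^2\otimes\C^2$, so the associated subspace $\bW$ has dimension $4$ and hence $\rank\ \cT\ge\dim\bW=4$. (Equivalently, one may flatten $\cT$ by fusing the two qubit legs into a single $\C^4$ via \eqref{conver2to1}; the resulting $4\times4$ matrix is a permutation matrix of rank $4$, and the rank of any flattening lower-bounds the tensor rank.) This gives $\rank\ \cT=4$.

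The heart of the argument is to identify $\cG(2,2)\otimes_K\cT$ with the matrix-multiplication tensor $\cB$ up to an invertible change of basis. First I would expand the Kronecker product explicitly. Since $\cG(2,2)=\ket0\ket0+\ket1\ket1$ fuses its $i$-th leg with the $i$-th leg of $\cT$ for $i\in\{1,2\}$ while the $\C^4$ leg of $\cT$ is left untouched, each of the two terms of $\cG(2,2)$ multiplies each of the four terms of $\cT$. Encoding each fused pair $\ket{i}\otimes\ket{j}$ as a single index $2i+j\in\{0,1,2,3\}$ through \eqref{conver2to1}, I expect to obtain the eight-term monomial tensor
\[
\cG(2,2)\otimes_K\cT=\ket{000}+\ket{012}+\ket{101}+\ket{113}+\ket{220}+\ket{232}+\ket{321}+\ket{333}\in\C^4\otimes\C^4\otimes\C^4.
\]

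Next I would compare this with $\cB=\ket{000}+\ket{012}+\ket{120}+\ket{201}+\ket{321}+\ket{213}+\ket{132}+\ket{333}$ and exhibit the map carrying one onto the other. Applying to the \emph{first} tensor factor the permutation $\sigma=(1\,2)$ that swaps the basis vectors $\ket1$ and $\ket2$ (fixing $\ket0$ and $\ket3$), one checks term-by-term that $\cB$ is sent exactly to the eight-term tensor above; that is, $\cG(2,2)\otimes_K\cT=(P_\sigma\otimes I\otimes I)\,\cB$, where $P_\sigma\in GL(\C^4)$ is the corresponding permutation matrix. Since tensor rank is invariant under the action of $GL(\C^4)\times GL(\C^4)\times GL(\C^4)$, and $\rank\ \cB=7$ as recalled just before the statement, I conclude $\rank\big(\cG(2,2)\otimes_K\cT\big)=\rank\ \cB=7$, which is precisely \eqref{7rankeq}.

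The only genuinely creative step is recognizing the correct permutation and the leg it acts on that realizes the equivalence with $\cB$; everything after that is invariance of rank under $GL$, which is routine. Consequently the main obstacle is purely computational and prone to indexing errors: I would take particular care with the explicit expansion of $\cG(2,2)\otimes_K\cT$ and with the encoding \eqref{conver2to1}, double-checking each of the eight monomials against both $\cB$ and its image under the swap $\ket1\leftrightarrow\ket2$ on the first factor.
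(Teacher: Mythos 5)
Your proposal is correct and follows essentially the same route as the paper: the rank-$4$ claim via slicing along the $\C^4$ factor (the four slices $\ket{00},\ket{10},\ket{01},\ket{11}$ forming a rank-one basis of $\C^2\otimes\C^2$), and the identification of $\cG(2,2)\otimes_K\cT$ with $\cB$ by the swap $\ket{1}\leftrightarrow\ket{2}$ on the first $\C^4$ factor, which is exactly the isomorphism $\tilde\psi$ used in the paper. Your term-by-term verification of the eight monomials checks out, so nothing further is needed.
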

\begin{proof}  Write $\cT=\sum_{i=0}^3 \cT_i\otimes \ket{i}$, where $\cT_i\in \C^2\otimes\C^2$.  As $\cT_{i-1}, i\in[4]$ is a rank one basis of $\C^2\otimes \C^2$, Lemma \ref{charrnakten} yields that $\rank\ \cT=4$.  Observe next that
\begin{eqnarray*}
(\ket{00}+\ket{11})\otimes_K \cT=(\ket{00}+\ket{11})\otimes_K(\ket{000}+\ket{012}+\ket{101}+\ket{113})=\\
\ket{000}+\ket{012}+\ket{101}+\ket{113}+\ket{220}+\ket{232}+\ket{321}+\ket{333}=\cC.
\end{eqnarray*}
Let $\psi:\C^4\to\C^4$ will be the isomorphism induced by
\[\phi(\ket{0})=\ket{0}, \quad \phi(\ket{1})=\ket{2}, \quad \phi(\ket{2})=\ket{1}, \quad \phi(\ket{4})=\ket{4}.\]
Denote by $\tilde \psi: \otimes^3\C^4$ the isomorphism induced by $\tilde\phi(\bx\otimes\by\otimes\bz)=\phi(\bx)\otimes\by\otimes\bz$.  Observe that $\tilde \psi$ preserves the rank of tensors in $\otimes^3\C^4$. Clearly, $\tilde \psi(\cC)=\cB$. 
So \eqref{7rankeq} follows from the fact that $\rank\ \cB=7$.
\end{proof}

\begin{theorem}\label{rankG2W}  Let $\cW\in \otimes^3\C^2$ be the state given by Dirac's notation $\ket{001}+\ket{010}+\ket{100}$.  Then
\begin{equation}\label{rankG2W1}
\rank\ \cG(k,2)\otimes_K \cW=\rank\ \cG(k,2)\otimes \cW=3k.
\end{equation}
\end{theorem}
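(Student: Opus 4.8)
The plan is to get the upper bound for free and reduce everything to a single lower bound. By \eqref{powrankin},
$\rank\ \cG(k,2)\otimes_K\cW \le \rank\ \cG(k,2)\otimes\cW \le (\rank\ \cG(k,2))(\rank\ \cW)=3k$,
using $\rank\ \cG(k,2)=k$ and $\rank\ \cW=3$. Hence it suffices to prove the one inequality $\rank\ \cG(k,2)\otimes_K\cW\ge 3k$ for the smaller Kronecker product; all three quantities are then squeezed to $3k$. First I would make the tensor explicit. Splitting $\cW$ on its third qubit gives the matrices $A=\ket{01}+\ket{10}$ and $B=\ket{00}$ in $\C^2\otimes\C^2$, so the third-factor slices of $\cG(k,2)\otimes_K\cW$ are $I_k\otimes A$ and $I_k\otimes B$. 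Acting by $I_k\otimes A^{-1}$ on the first factor (which preserves rank) turns the pencil into $\{I_{2k},\,Z\}$ with $Z=I_k\otimes(A^{-1}B)$; since $A^{-1}B$ is a single nilpotent $2\times 2$ Jordan block, $Z$ is $k$ copies of it, i.e. $Z=\sum_{i=1}^k \be_{2i}\be_{2i-1}\trans$. Call the resulting tensor $\cX\in\C^{2k}\otimes\C^{2k}\otimes\C^2$.

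For the lower bound I would combine the substitution method with Lemma \ref{charrnakten}. Writing $\cX$ by its slices along the first factor, the odd slices are the rank-one matrices $\be_{2i-1}\otimes\ket{0}$ and the even slices are the rank-two matrices $\be_{2i}\otimes\ket{0}+\be_{2i-1}\otimes\ket{1}$. The $k$ rank-one (odd) slices are exactly the resource the substitution method consumes: peeling them off one at a time lowers the first-factor dimension from $2k$ to $k$ and drops the rank by at least $k$, leaving a tensor $\cX'\in\C^{k}\otimes\C^{2k}\otimes\C^2$ whose $i$-th first-factor slice is $\bv_i\otimes\ket{0}+\be_{2i-1}\otimes\ket{1}$, where $\bv_i\equiv\be_{2i}$ modulo the odd basis vectors (the exact value absorbs the adversarially chosen substitution parameters). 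Thus $\rank\ \cX\ge k+\rank\ \cX'$; one only has to check the easy bookkeeping that all $k$ steps stay legal, which holds because the odd slices have distinct leading coordinates and so remain nonzero throughout.

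It then remains to show $\rank\ \cX'\ge 2k$, which I would read off from a single flattening (Lemma \ref{charrnakten} applied to the middle factor): the second-factor vectors occurring in $\cX'$ are $\{\bv_i\}_{i=1}^k\cup\{\be_{2i-1}\}_{i=1}^k$, each tagged by a distinct basis element of $\C^{k}\otimes\C^2$. Since the $\be_{2i-1}$ are the odd basis vectors and each $\bv_i\equiv\be_{2i}$ modulo them, these $2k$ vectors span $\C^{2k}$ regardless of the substitution parameters, so the flattening has rank $2k$ and $\rank\ \cX'\ge 2k$. Combining, $\rank\ \cG(k,2)\otimes_K\cW\ge k+2k=3k$, and with the upper bound all three ranks equal $3k$.

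The main obstacle is precisely the step that forces the extra $+k$. A single flattening of $\cX$ only certifies $\rank\ \cX\ge 2k$, the dimension of the first-factor slice span, which is too weak; the real content of the statement is that the $k$ coincident nilpotent Jordan blocks cannot share rank-one components across the shared third qubit, so that no saving below $3k$ occurs. The substitution method is what converts the ``wasted'' rank-two slices into the missing $k$. Equivalently, $\cX$ is a regular matrix pencil with a single eigenvalue of Jordan type $(2,2,\dots,2)$, and its rank is computable from the Ja'Ja'--Takche theory of tensors with a dimension-two factor \cite{jaja86}; I would mention this as an alternative route, but I expect the self-contained substitution-plus-flattening argument above to be the cleanest to write.
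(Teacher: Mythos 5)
Your proof is correct, and the lower bound is reached by a genuinely different route from the paper's. The setup coincides: both arguments get the upper bound $3k$ from \eqref{powrankin} and reduce the tensor to the matrix pencil $(I_{2k},\, I_k\ox J)$ with $J$ a nilpotent $2\times 2$ Jordan block. At that point the paper simply invokes Ja'Ja's classification of tensors in $\C^{2k}\ox\C^{2k}\ox\C^2$ via the Kronecker canonical form of the pencil \cite{Ja1978Optimal}, which says the rank is $3k$ precisely when the Jordan form consists of $k$ identical $2\times 2$ blocks. You instead prove $\rank\ \cX\ge 3k$ from scratch: $k$ applications of the substitution method to the rank-one (odd) first-factor slices give $\rank\ \cX\ge k+\rank\ \cX'$, and a single flattening of $\cX'$ in the spirit of Lemma \ref{charrnakten} gives $\rank\ \cX'\ge 2k$, because the vectors $\bv_i\equiv\be_{2i}$ modulo the odd basis vectors together with the $\be_{2i-1}$ span $\C^{2k}$ for every adversarial choice of substitution constants. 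The two bookkeeping points you flag (the slices being removed stay nonzero throughout, and the final flattening bound is uniform in the constants) are exactly the points that need checking, and both hold. What the paper's route buys is brevity and the sharper ``if and only if'' characterization of which pencils attain rank $3k$; what yours buys is a self-contained elementary argument that does not require extracting the precise rank formula from the pencil classification, and it is stylistically consistent with the paper's own proof of Theorem \ref{thm:w2}, which is likewise substitution-based.
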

\begin{proof}  Let $\cX\in \C^{2k}\otimes\C^{2k}\otimes \C^2$.  Write $\cX=X_1\otimes |0\rangle+X_2\otimes |1\rangle$, where $X_1,X_2\in\C^{(2k)\times (2k)}$.
The fundamental result of \cite{Ja1978Optimal} yields that the rank of $\cX$ can be determined completely by the Kronecker canonical form of the pair $(X_1,X_2)$.  In particular, $\rank\ \cX\le 3k$.
Assume that the span of $X_1,X_2$ is two dimensional with a basis $A_0,A_1\in\C^{4\times 4}$. \red{(Here by $\C^{4\times 4}=\C^4\otimes \C^4$ we denote the space for $4\times 4$ complex matrices.)}  Suppose furthermore that $A_0$ is invertible.  Then $\rank\ \cX=3k$ if and only if
the Jordan canonical form of $A_0^{-1}A_1$ consists of $k$ identical $2 \times 2$ Jordan blocks. 

Let $\cX=\cG(k,2)\otimes_K \cW$.  Clearly, 
\[\rank\ \cG(k,2)\otimes_K\cW\le \rank\ \cG(k,2)\otimes\cW\le(\rank\ \cG(k,2))(\rank\ \cW)=3k.\]
Hence it is enough to show that $\rank\ \cX=3k$.  
Note that $\cX\in \C^{2k}\otimes\C^{2k}\otimes \C^2$.  
   Observe next we can identify $\cG(k,2)$ with the $k\times k$ identity matrix $I_k$.
  \begin{eqnarray*} 
  \cG(2,2)\otimes_K \cW=I_k\otimes_K (|001\rangle+|010\rangle+|100\rangle)=
  (I_k\otimes_K(\ket{01}+\ket{10})\otimes \ket{0} +(I_k\otimes_K\ket{00})\otimes\ket{1}=
  A_0\otimes|0\rangle+A_1|1\rangle.
  \end{eqnarray*}
  Here
  \[A_0=I_k\otimes_K B_0, \quad A_1=I_k\otimes_K B_1, \quad B_0=\left[\begin{array}{cc}0&1\\1&0\end{array}\right], \quad 
  B_1=\left[\begin{array}{cc}1&0\\0&0\end{array}\right].
  \]
  Note that $B_0^2=I_2$ and $B_0B_1=J=\left[\begin{array}{cc}0&0\\1&0\end{array}\right]$.   Hence $A_0^{-1}=I_k\otimes_K B_0$ and
  $A_0A_1=I_k\otimes_K J$.  Thus the Jordan canonical form of $A_0A_1$ consists of $k$ idientical Jordan blocks $J$.    Hence $\rank\ \cX=3k$.
\end{proof}

\section{The rank of $\cW\otimes\cW$}
\label{sec:w2}
Let $\phi:\C^{\bm}\to \C$ be a nonzero linear transformation.  Then $\phi$ extends to two  linear transformations $\tilde\phi:\C^{\bm}\times \C^{\bn}\to \C^{\bn}, \hat\phi:\C^{\bn}\times \C^{\bm}\to \C^{\bn}$ as follows:  
\[\tilde\phi(\cX\otimes\cY)=\phi(\cX)\cY, \quad \hat\phi(\cY\otimes\cX)=\phi(\cX)\cY, \quad\textrm{for all }\cX\in\C^{\bm},\cY\in\C^{\bn}.\]
\begin{lemma}
\label{le:prod}
Let $\cX\in \C^{\bm}$ be a rank one $d$-tensor, and $\cY\in\C^{\bn}$ be a $d'$-partitie state. Then $\rank\ (\cX\otimes \cY)=\rank\ \cY=r$. 
Assume furthermore that $\cX\otimes\cY=\sum_{i=1}^r \cZ_i$, where $\cZ_i\in\C^{\bm}\otimes \C^{\bn}$ is a rank one tensor. Then $\cZ_i=\cX\otimes \cY_i$, where $\cY_i\in \C^{\bn}$ is a rank one tensor, and $\sum_{i=1}^r \cY_i=\cY$.	
\end{lemma}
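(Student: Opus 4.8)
The plan is to establish the rank equality and the rigidity of optimal decompositions separately, using throughout the contraction $\tilde\phi$ introduced above together with its mirror image acting on the $\C^{\bn}$ factor. Write $\cX=\otimes_{i=1}^d\bx_i$ with every $\bx_i\neq\mathbf{0}$, so that $\cX\neq\mathbf{0}$, and fix once and for all a linear functional $\phi:\C^{\bm}\to\C$ with $\phi(\cX)=1$.

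For the rank equality, the upper bound $\rank(\cX\otimes\cY)\le(\rank\cX)(\rank\cY)=\rank\cY$ is immediate from \eqref{powrankin} since $\rank\cX=1$. For the reverse inequality I use that $\tilde\phi$ does not increase tensor rank: it sends a rank one tensor $\cU\otimes\cV$ (with $\cU\in\C^{\bm}$, $\cV\in\C^{\bn}$ rank one) to $\phi(\cU)\cV$, a tensor of rank at most one, so applying it termwise to an optimal decomposition gives $\rank\tilde\phi(\cZ)\le\rank\cZ$ for all $\cZ\in\C^{\bm}\otimes\C^{\bn}$. Since $\tilde\phi(\cX\otimes\cY)=\phi(\cX)\cY=\cY$, this yields $\rank\cY\le\rank(\cX\otimes\cY)$, and hence equality; put $r=\rank\cY$.

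For the structural statement I start from a rank $r$ decomposition $\cX\otimes\cY=\sum_{i=1}^r\cZ_i$. Each $\cZ_i$, a rank one $(d+d')$-tensor, factors as $\cZ_i=\cU_i\otimes\cV_i$ with $\cU_i\in\C^{\bm}$ and $\cV_i\in\C^{\bn}$ both rank one. Applying $\tilde\phi$ produces $\cY=\sum_{i=1}^r\phi(\cU_i)\cV_i$, a representation of $\cY$ as a sum of $r$ rank $\le 1$ tensors; since $\rank\cY=r$ it is minimal, which forces $\cV_1,\dots,\cV_r$ to be linearly independent in $\C^{\bn}$ (any linear dependence would eliminate a term and express $\cY$ with fewer than $r$ rank one tensors). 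The key device is the mirror contraction: for every functional $\psi:\C^{\bn}\to\C$, applying the linear map $\cU\otimes\cV\mapsto\psi(\cV)\cU$ to the identity $\cX\otimes\cY=\sum_i\cU_i\otimes\cV_i$ gives $\psi(\cY)\cX=\sum_{i=1}^r\psi(\cV_i)\cU_i$ in $\C^{\bm}$.

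The heart of the argument, and the step I expect to be the main obstacle, is to extract from this family of identities that each $\cU_k$ is a scalar multiple of $\cX$. Linear independence of the $\cV_i$ is exactly what makes the evaluation map $\psi\mapsto(\psi(\cV_1),\dots,\psi(\cV_r))$ surjective onto $\C^r$, so I may select $\psi=\psi_k$ with $\psi_k(\cV_i)=\delta_{ik}$. The identity then collapses to $\cU_k=\psi_k(\cY)\cX$, i.e.\ $\cU_k=c_k\cX$ with $c_k\neq 0$ because $\cZ_k\neq\mathbf{0}$. Hence $\cZ_k=\cX\otimes(c_k\cV_k)$, and setting $\cY_k:=c_k\cV_k$, a rank one tensor, gives $\cZ_k=\cX\otimes\cY_k$. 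Summing over $k$ gives $\cX\otimes\cY=\cX\otimes\bigl(\sum_k\cY_k\bigr)$, and since $\cX\neq\mathbf{0}$ the map $\cZ\mapsto\cX\otimes\cZ$ is injective, whence $\sum_k\cY_k=\cY$, completing the proof.
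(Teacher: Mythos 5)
Your proof is correct. The rank-equality half coincides with the paper's: both contract the $\C^{\bm}$ factor with a functional $\phi$ satisfying $\phi(\cX)=1$ and use that $\tilde\phi$ cannot increase rank. For the structural half you diverge. The paper reduces by induction to the case where $\cX$ is a single vector $\bx$ and argues by contradiction: if some $\bx_i\not\propto\bx$, choose $\phi$ with $\phi(\bx)=1$ and $\phi(\bx_i)=0$, which writes $\cY$ as a sum of at most $r-1$ rank one tensors and contradicts $\rank\cY=r$. You instead keep $\cX$ intact, first extract from minimality that the $\cV_1,\dots,\cV_r$ are linearly independent, and then contract the $\C^{\bn}$ factor with a dual basis $\psi_k$, $\psi_k(\cV_i)=\delta_{ik}$, to compute $\cU_k=\psi_k(\cY)\cX$ outright. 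Both arguments rest on the same rigidity of minimal decompositions, but yours is direct rather than by contradiction, avoids the induction on the number of factors of $\cX$, and produces the proportionality constants $c_k=\psi_k(\cY)$ explicitly; the paper's version uses only functionals on a single $\C^m$ and never needs to invoke linear independence of the $\cY_i$ explicitly. Either route is acceptable, and your dual-basis step is fully justified by the linear independence you established.
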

\begin{proof}  
Using induction, it is enough to show the case where $d=1$, i.e. $\bm=(m)$, $m>1$ and $\cX=\bx$ is a nonzero vector in $\C^m$.  Assume that $r=\rank\ \cY,r'=\rank\ \bx\otimes \cY$.  Clearly, $r'\le r$.  Then  $\bx\otimes\cY=\sum_{i=1}^{r'} \bx_i\otimes \cY_i$, where $\bx_i$ and $\cY_i$ are rank one tensors.  Let $\phi:\C^{m}\to \C$ be a linear functional such that $\phi(\bx)=1$.  Thus 
\[\tilde\phi(\bx\otimes\cY)=\phi(\bx)\cY=\cY=\sum_{i=1}^{r'}\phi(\bx_i)\cY_i.\]
As $\rank\ \cY=r$ it follows that $r=r'$.  Furthermore, $\phi(\bx_i)\ne 0$ for $i\in[r]$.
Assume that $\bx_i$ is not proportional to $\bx$.  Then there exists $\phi$ such that $\phi(\bx)=1$ and $\phi(\bx_{i})=0$.  This will contradict the assumption that $\rank\ \cY=r$.  Without loss of generality we can assume that $\bx_i=\bx$ for each $i\in[m]$.
\end{proof}

\begin{lemma}
\label{le:a11}
Let $\cT=\ba_{1,1}\ox \cdots \ox \ba_{d,1}+\ba_{1,2}\ox \cdots \ox \ba_{d,2}
=
\cBb_{1,1}\ox \cdots \ox \cBb_{d,1}+\cBb_{1,2}\ox \cdots \ox \cBb_{d,2}
$ be a nonzero $d$-tensor, where $d\ge 3$.

(i) If $\cT$ has rank two 
then $\ba_{j,1}\propto \ba_{j,2}$ if and only if $\cBb_{j,1}\propto \cBb_{j,2}$ if and only if $\ba_{j,1}\propto \ba_{j,2}\propto \cBb_{j,1}\propto \cBb_{j,2}$. 

(ii) If $\cT$ has rank two, $\ba_{j,1}$ and $\ba_{j,2}$ are linearly independent for $j\in S\subseteq [d]$ and $\abs{S}>2$, then $\textrm{span}\{\ba_{1,1}\ox \cdots \ox \ba_{d,1},\ba_{2,2}\ox \cdots \ox \ba_{d,2}\}=\textrm{span}\{
\cBb_{1,1}\ox \cdots \ox \cBb_{d,1},\cBb_{1,2}\ox \cdots \ox \cBb_{d,2}\}$.

(iii) If $\cT$ has rank one, $\ba_{j,i},\cBb_{j,i}$ are all nonzero, then $\ba_{j,1}\propto \ba_{j,2}$ and $\cBb_{k,1}\propto \cBb_{k,2}$  for $j\in S\subset [d]$, $k\in T\subset [d]$, and $\ba_{j,1}\propto \ba_{j,2}\propto \cBb_{j,1}\propto \cBb_{j,2}$ for $j\in S\cap T$ and $\abs{S}=\abs{T}=d-1$. 
\end{lemma}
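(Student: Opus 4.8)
The plan is to analyze all three parts through the \emph{mode-$j$ contraction spaces}. For each $j\in[d]$ let $U_j\subseteq\C^{n_j}$ be the span of all vectors obtained from $\cT$ by applying linear functionals $\phi_k$ to every factor $k\neq j$. Since this construction uses only $\cT$, the space $U_j$ is the same whether computed from the $\ba$-decomposition or the $\cBb$-decomposition; this invariance is the engine of the proof. Contracting a rank-one tensor yields a scalar multiple of its $j$-th factor, so the two-term $\ba$-decomposition gives $U_j\subseteq\lin\{\ba_{j,1},\ba_{j,2}\}$, hence $\dim U_j\le 2$, with $\dim U_j\le 1$ when $\ba_{j,1}\propto\ba_{j,2}$; the same bounds hold on the $\cBb$-side. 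I will also use repeatedly that contraction cannot raise tensor rank (a contraction of an $r$-term decomposition is a decomposition into at most $r$ rank-one terms) and that $U_j\neq\{0\}$ because $\cT\neq 0$.

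For (i), I would first record that a genuine rank-two tensor must be non-proportional in at least two modes: if $\ba_{j,1}\propto\ba_{j,2}$ for all but at most one index, the common factors pull out and $\cT$ collapses to rank $\le 1$, a contradiction. Consequently, fixing $j$ with $\ba_{j,1}\not\propto\ba_{j,2}$, there is another index $k_0\neq j$ with $\ba_{k_0,1}\not\propto\ba_{k_0,2}$; choosing a functional at $k_0$ that kills one of $\ba_{k_0,1},\ba_{k_0,2}$ and keeps the other, with functionals nonzero on both factors elsewhere, deposits both $\ba_{j,1}$ and $\ba_{j,2}$ into $U_j$, so $\dim U_j=2$. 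Conversely $\ba_{j,1}\propto\ba_{j,2}$ forces $\dim U_j\le 1$. Thus $\dim U_j=2\iff\ba_{j,1}\not\propto\ba_{j,2}$, and identically $\dim U_j=2\iff\cBb_{j,1}\not\propto\cBb_{j,2}$. Since $U_j$ is intrinsic this gives $\ba_{j,1}\propto\ba_{j,2}\iff\cBb_{j,1}\propto\cBb_{j,2}$; and when these hold, $U_j$ is the common one-dimensional space $\lin\{\ba_{j,1}\}=\lin\{\cBb_{j,1}\}$, yielding $\ba_{j,1}\propto\ba_{j,2}\propto\cBb_{j,1}\propto\cBb_{j,2}$, which is the third equivalence.

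For (ii) the tool is Kruskal's theorem via grouping. As $\abs{S}\ge 3$, I would pick $j_1,j_2,j_3\in S$ and partition $[d]$ into three blocks $G_1,G_2,G_3$ with $j_t\in G_t$, viewing $\cT$ as a $3$-tensor on the factors $\ox_{j\in G_t}\C^{n_j}$. Because a tensor product of vectors is proportional to another exactly when all factors are, the grouped vectors $\ox_{j\in G_t}\ba_{j,1}$ and $\ox_{j\in G_t}\ba_{j,2}$ are linearly independent for each $t$ (they disagree at mode $j_t\in S$), so each grouped pair has Kruskal rank $2$ and the Kruskal sum is $2+2+2=6=2\cdot 2+2$. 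Hence condition \eqref{Kruskalcon} holds and the grouped $\ba$-decomposition is the unique rank-two decomposition of the $3$-tensor. The grouped $\cBb$-decomposition is another rank-two decomposition of the same $3$-tensor, so uniqueness forces the two grouped rank-one terms to coincide up to scaling and order; undoing the invertible grouping gives $\ba_i=\l_i\cBb_{\s(i)}$ as $d$-tensors, whence the two spans coincide.

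For (iii), with $\rank\cT=1$ I first pin down the structure. If the $\ba$-factors were non-proportional in two modes $j_1\neq j_2$, contracting the remaining modes by functionals nonzero on all factors would leave the matrix $\alpha\,\ba_{j_1,1}\ox\ba_{j_2,1}+\beta\,\ba_{j_1,2}\ox\ba_{j_2,2}$ with $\alpha,\beta\neq 0$, which has rank two; since contraction cannot raise rank this contradicts $\rank\cT=1$. So the $\ba$-factors agree up to scalars in all but at most one mode, and I may take $S$ of size $d-1$ with $\ba_{j,1}\propto\ba_{j,2}$ on $S$; symmetrically I obtain $T$ of size $d-1$ for the $\cBb$-factors. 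Finally, for $j\in S\cap T$ the intrinsic $U_j$ is nonzero and one-dimensional, and equals both $\lin\{\ba_{j,1}\}$ (as $j\in S$) and $\lin\{\cBb_{j,1}\}$ (as $j\in T$), so $\ba_{j,1}\propto\cBb_{j,1}$, which with the intra-decomposition proportionalities gives $\ba_{j,1}\propto\ba_{j,2}\propto\cBb_{j,1}\propto\cBb_{j,2}$. I expect the main obstacle to be the bookkeeping in part (i): proving cleanly that a genuine rank-two tensor is non-proportional in at least two modes and that this is exactly what makes $\dim U_j$ jump to $2$, so that the mode-wise proportionality pattern is forced to be identical for both decompositions; once the invariance and dimension behaviour of the $U_j$ are in place, (ii) and (iii) follow from Kruskal's theorem and rank-monotonicity of contraction.
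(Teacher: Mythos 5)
Your proof is correct, and part (ii) is essentially the paper's argument: group the $d$ factors into three blocks each containing one index of $S$, check that each grouped pair is linearly independent so the Kruskal condition $2+2+2\ge 2\cdot 2+2$ holds, and invoke uniqueness. For parts (i) and (iii) you take a slightly different, somewhat more self-contained route. The paper handles (i) by factoring out the common mode-$j$ vector and appealing to its Lemma \ref{le:prod} (any rank-$r$ decomposition of $\bx\otimes\cY$ has $\bx$ in every term), and handles (iii) by first invoking Kruskal to cut the number of non-proportional modes down to two and then reducing to a $2\times2$ matrix; you instead organize everything around the intrinsic contraction spaces $U_j$ (the spans of the mode-$j$ slices, exactly the object of the paper's Lemma \ref{charrnakten}), showing $\dim U_j=2$ iff the mode-$j$ pair is non-proportional and reading off all the proportionality claims from the fact that $U_j$ does not depend on the decomposition. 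This buys you two things: the chain of equivalences in (i) falls out of a single decomposition-independent invariant rather than a case-by-case appeal to Lemma \ref{le:prod}, and in (iii) you bypass Kruskal entirely by contracting all but two non-proportional modes down to a rank-two matrix, which contradicts $\rank\cT=1$ directly. The one step you should make explicit (the paper also leaves it implicit) is the final proportionality $\ba_{j,1}\propto\cBb_{j,1}$ for $j\in S\cap T$ in (iii); your $U_j$ formalism actually supplies this cleanly, since $U_j$ is the common one-dimensional space $\lin\{\ba_{j,1}\}=\lin\{\cBb_{j,1}\}$. No gaps.
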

\begin{proof} (i)  Assume that $\rank\ \cX=2$ and $\ba_{j,1}\propto \ba_{j,2}$.  By permuting the factors of the tensor products, we can assume that $j=1$.  Then Lemma \ref{le:prod} yields that $\cBb_{1,1}$ and $\cBb_{1,2}$ are nonzero vectors proportional to
$\ba_{1,1}$ and $\ba_{1,2}$.

(ii) Without loss fo generality we can assume that $\ba_{j,1}$ and $\ba_{j,2}$ are linearly independent for $j=1,2,3$.  View $\cT$ as a $3$-tensor $\cT'$ on the tensor product $\C^{n_1}\otimes \C^{n_2}\otimes \C^m$, where $ \C^m=\otimes_{i=3}^d \C^{n_i}$. Set
$\ba_{3,j}'=\otimes_{i=3}^d \ba_{i,j},  \cBb_{3,j}'=\otimes_{i=3}^d \cBb_{i,j}$, for $j=1,2$.
As $\rank\ \cT=2$, and the pairs $\ba_{3,1},\ba_{3,2}$ and $\cBb_{3,1},\cBb_{3,2}$  are linearly independent it follows that the pairs $\ba_{3,1}',\ba_{3,2}'$ and $\cBb_{3,1}',\cBb_{3,2}'$ are linearly independent.  Use Kruskal's theorem for $\cT'$ to deduce that $\{\ba_{1,1}\ox \cdots \ox \ba_{d,1},\ba_{2,2}\ox \cdots \ox \ba_{d,2}\}=\{
\cBb_{1,1}\ox \cdots \ox \cBb_{d,1},\cBb_{1,2}\ox \cdots \ox \cBb_{d,2}\}$.

(iii) We first discuss the equality $\cT=\ba_{1,1}\ox \cdots \ox \ba_{d,1}+\ba_{1,2}\ox \cdots \ox \ba_{d,2}$, the assumption that $\rank\ \cT=1$ and all $\ba_{i,1},\ba_{i,2}$ are nonzero.  Use Kruskal's theorem, as in the proof of part (ii), to deduce  that at most two pairs of vectors $\ba_{j,1}, \ba_{j,2}$ are linearly independent.  Without loss of generality we can assume that $\ba_{j,1}=\ba_{j,2}$ for $j>3$.  Use Lemma  
\ref{le:prod} to deduce that $1=\rank\ \cT$ is the rank of the matrix $\ba_{1,1}\otimes \ba_{2,1}+\ba_{1,2}\otimes \ba_{2,2}$.  Clearly this matrix has rank one iff and only $\ba_{1,1}\otimes \ba_{2,1}\ne -\ba_{1,2}\otimes \ba_{2,2}$, and 
either $\ba_{1,1}\propto \ba_{1,2}$ or $\ba_{2,1}\propto \ba_{2,2}$.  In particular, there exists $S\subset [d], \abs{S}=d-1$ such that $\ba_{i,1}\propto \ba_{i,2}$ for $i\in S$.
Similarly, there exists $T\subset [d], \abs{T}=d-1$ such that $\cBb_{i,1}\propto \cBb_{i,2}$ for $i\in T$.  Hence $\ba_{j,1}\propto \ba_{j,2}\propto \cBb_{j,1}\propto \cBb_{j,2}$ for $j\in S\cap T$.
\end{proof}

\begin{lemma}
\label{le:symqubit} 
(i) If two symmetric $d$-qubit tensors $\cX$ and $\cY$ are equivalent, then there exists $A\in GL(\C^2)$ such  $A^{\ox d}\cX=\cY$.
\\
(ii) Suppose $\ba,\cBb,\cBc,\cBd$ are pairwise linearly independent vectors in $\cBC^2$. If $\ba\ox \ba+\cBb\ox \cBb=\cBc\ox\cBc+\cBd\ox \cBd$ then $\cBc=\a \ba+ \b \cBb$ and $\cBd=\pm(\b \ba- \a \cBb)$, where $\a,\b$ are nonzero complex numbers such that $\a^2+\b^2=1$.
\\
(iii) Suppose $\ba,\cBb \in\cBC^2$, and $x,y$ are two complex numbers. Then the 3-tensor $\cZ:=x\ba^{\ox3}+y\cBb^{\ox3}+(\ba+\cBb)^{\ox3}$ is equivalent to the tensor \red{$\cW$} if and only if $\ba$ and $\cBb$ are linearly independent, $xy\ne0$ and $4xy=(x+y+xy)^2$. 
\end{lemma}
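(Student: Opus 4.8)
The plan is to pass from symmetric $3$-qubit tensors to binary cubic forms, where the $GL(\C^2)$-action becomes a linear change of variables and the equivalence classes are governed by the multiplicity pattern of the roots. Concretely, I would identify a symmetric rank-one tensor $\bv^{\ox 3}$ with the cubic $(v_1 X+v_2 Y)^3$ and extend linearly; this identification is $GL(\C^2)$-equivariant, and by part (i) two symmetric $3$-qubit tensors are equivalent exactly when their cubics are related by a $GL_2$ change of variables. A nonzero binary cubic factors over $\C$ into three linear forms, and the multiplicity pattern $(1,1,1)$, $(2,1)$, or $(3)$ of its roots in $\mathbb{P}^1$ is a complete $GL_2$-invariant (since $PGL_2$ acts $3$-transitively on $\mathbb{P}^1$ and the scalar fixes the leading coefficient). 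As $\cW$ corresponds, up to scale, to $X^2Y$, whose pattern is $(2,1)$, the assertion ``$\cZ$ is equivalent to $\cW$'' becomes ``the cubic of $\cZ$ has a repeated but not triple root,'' i.e. its discriminant vanishes while it is not a perfect cube.

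First I would dispose of the degenerate case: if $\ba\propto\cBb$, then $\ba$, $\cBb$, and $\ba+\cBb$ are all proportional, so $\cZ$ is a scalar multiple of a single cube and has rank at most one; since $\rank\ \cW=3$ and equivalence preserves rank, $\cZ$ cannot be equivalent to $\cW$, which shows linear independence is necessary. Assuming $\ba,\cBb$ linearly independent, I choose $A\in GL(\C^2)$ with $A\ba=\be_1$, $A\cBb=\be_2$, and apply $A^{\ox 3}$; then $\cZ$ is equivalent to $x\be_1^{\ox 3}+y\be_2^{\ox 3}+(\be_1+\be_2)^{\ox 3}$, whose associated cubic is
\begin{equation*}
f(X,Y)=xX^3+yY^3+(X+Y)^3=(x+1)X^3+3X^2Y+3XY^2+(y+1)Y^3.
\end{equation*}

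Next I would compute the discriminant $\Delta$ of $f$. Evaluating the standard cubic discriminant at coefficients $(x+1,3,3,y+1)$, a direct calculation gives $\Delta=-27\big[(x+y+xy)^2-4xy\big]$, so $\Delta=0$ is precisely the condition $4xy=(x+y+xy)^2$. Matching $f$ against $(\alpha X+\beta Y)^3$ forces $\alpha=\beta$ with $\alpha^3=1$, whence $f$ is a perfect cube if and only if $x=y=0$; moreover $f$ is never identically zero, since the coefficient of $X^2Y$ is $3$. It then remains to reconcile the triple-root exclusion with the stated hypothesis $xy\ne0$: on the zero locus of $\Delta$ one has $x=0\Leftrightarrow y=0$ (putting $x=0$ in $4xy=(x+y+xy)^2$ gives $y^2=0$), so the only point with $xy=0$ on this locus is $(x,y)=(0,0)$, which is exactly the perfect-cube point. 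Hence, under linear independence, $\cZ$ is equivalent to $\cW$ if and only if $\Delta=0$ and $f$ is not a perfect cube, if and only if $4xy=(x+y+xy)^2$ and $xy\ne0$, as claimed.

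The main obstacle I anticipate is justifying cleanly that the $GL_2$-orbit of $\cW$ is exactly the class of binary cubics with multiplicity pattern $(2,1)$ (equivalently, discriminant zero but not a perfect cube): this is classical, but it is the conceptual heart of the argument and should be stated carefully, invoking part (i) to reduce equivalence of the symmetric tensors to a single $GL_2$ change of variables. By comparison the discriminant evaluation and the bookkeeping that matches $(x,y)\ne(0,0)$ with $xy\ne0$ on $\{\Delta=0\}$ are routine.
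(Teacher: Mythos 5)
Your proposal addresses only part (iii) of the lemma; parts (i) and (ii) are left unproved. Part (i) you explicitly invoke as an input (the paper itself only cites \cite{mkg2010} for it, so this is forgivable), but part (ii) --- the statement that $\ba\ox\ba+\cBb\ox\cBb=\cBc\ox\cBc+\cBd\ox\cBd$ forces $\cBc=\a\ba+\b\cBb$, $\cBd=\pm(\b\ba-\a\cBb)$ with $\a^2+\b^2=1$ --- is simply absent from your write-up. That is a genuine gap in a proof of the lemma as stated; the paper disposes of it in one line by observing that the $A\in GL(\C^2)$ sending $\ba,\cBb$ to $\cBc,\cBd$ must be complex orthogonal (since it preserves the symmetric bilinear form whose Gram matrix in the basis $\ba,\cBb$ is the identity), and any $2\times2$ complex orthogonal matrix has the stated form.

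For part (iii) your argument is correct and takes a genuinely different route from the paper's. The paper first quotes \cite{dvc2000} to reduce ``$\cZ$ equivalent to $\cW$'' to ``$\rank\ \cZ=3$'', then normalizes $\ba=\ket{0}$, $\cBb=\ket{1}$, slices $\cZ'$ into a matrix pencil $Z_1\ox\ket{0}+Z_2\ox\ket{1}$, and applies the Ja'Ja' criterion: rank $3$ if and only if $(Z_1-Z_2)^{-1}(Z_1+Z_2)$ has a single $2\times2$ Jordan block, which yields $4xy=(x+y+xy)^2$ as the double-eigenvalue condition. You instead pass to binary cubics and characterize the $\cW$-orbit by the root pattern $(2,1)$, i.e.\ vanishing discriminant but not a perfect cube; your discriminant evaluation $\Delta=-27\bigl[(x+y+xy)^2-4xy\bigr]$ is correct (it agrees with the trace-determinant computation hidden in the paper's pencil argument), and your bookkeeping identifying the perfect-cube locus with $(x,y)=(0,0)$, hence with $xy=0$ on $\{\Delta=0\}$, is right. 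The trade-off: your route needs the classical orbit classification of binary cubics under $GL_2$ (which, as you note, deserves a careful statement, and which plays the same black-box role that \cite{dvc2000} plays in the paper), while the paper's route needs the pencil/Jordan-form rank criterion; both are standard, and your version has the advantage of making the invariant ($\Delta$) explicit. Supply a proof of part (ii) and the whole lemma is covered.
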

\begin{proof}
(i) is proved in \cite{mkg2010}. 

(ii) Let $A\in GL(\C^2)$ such that $A\ba=\cBc, A\cBb=\cBd$.  Then $A$ is a complex orthogonal matrix.  

(iii) The result \cite{dvc2000} yields that $\cZ$ is equivalent to $W$ state if and only if $\rank\ \cZ=3$.  If $\cZ$ has rank $3$ then $\ba$ and $\cBb$ are linearly independent and $xy\ne 0$. Assume that this is the case.  Let $A\in GL(\C^2)$ such that $A\ba=\ket{0}, A\cBb=\ket{1}$.  
Then $\cZ'=A^{\otimes 3}\cZ=x\ket{0}^{\otimes 3}+y\ket{1}^{\otimes 3}+(\ket{0}+\ket{1})^{\otimes 3}$.  Write $\cZ'$ as $Z_1\otimes \ket{0}+Z_2\otimes \ket{1}$, where
\[Z_1=\left[\begin{array}{cc}x+1&1\\1&1\end{array}\right], \quad Z_2=\left[\begin{array}{cc}1&1\\1&1+y\end{array}\right].\]
Observe next that $Z_1-Z_2$ is a diagonal invertible matrix.  A well known result, e.g. \cite{Ja1978Optimal}, claims that $\rank\ \cZ= \rank\ \cZ'=3$ if and only if
the Jordan canonical form of $Z:=(Z_1-Z_2)^{-1}(Z_1+Z_2)$ has one Jordan block.
That is $Z$ has a double eigenvalue and $Z$ is not diagonizable.  The assumption that 
$Z$ has a double eigenvalue is equivalent to the condition $4xy=(x+y+xy)^2$.  If $Z$ was diagonazible then 
$Z=\lambda I_2$, where $I_2$ is the identity matrix.  As $Z$ has nonzero off-diagonal entries it follows that the Jordan canonical form of $Z$ has one Jordan block if $xy\ne 0$ and  $4xy=(x+y+xy)^2$.
Hence $\cZ$ has rank $3$ if and only if $\ba$ and $\cBb$ are linearly independent, $xy\ne0$ and $4xy=(x+y+xy)^2$.  
\end{proof}

\begin{proposition}
\label{pp:2x3=6} 
Let $X\in\C^{2\times 2}$ and $\cY\in \otimes^3\C^{2}$, where $\rank\ X=2$ and $\rank\ \cY=3$.  Then
\begin{enumerate}
\item $\rank(X\ox \cY)=6$.
\item Assume that
\begin{equation}
\label{eq:xy}
X\ox \cY=\sum^6_{j=1}\otimes_{i=1}^5\cBc_{j,i}.
\end{equation}
Then $\{\cBc_{1,1}\ox \cBc_{1,2}\ox\cBc_{1,i},\ldots,\cBc_{6,1}\ox \cBc_{6,2}\ox\cBc_{6,i}\}$ are linearly dependent for $i=3,4,5$, and
$\{\cBc_{1,p}\ox \cBc_{1,q}\ox\cBc_{1,r},\ldots,\cBc_{6,p}\ox \cBc_{6,q}\ox\cBc_{6,r}\}$ are linearly independent for $p=1,2$ and $3\le q<r \le 5$.
\end{enumerate}
\end{proposition}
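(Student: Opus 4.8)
The plan is to first reduce both parts to the normalized case $X=\cG(2,2)$ and $\cY=\cW$. Since $\rank X=2$ there are $A_1,A_2\in GL(\C^2)$ with $(A_1\ox A_2)X=\cG(2,2)=\ket{00}+\ket{11}$, and since $\rank\cY=3$ the state is genuinely tripartite entangled, so by the result \cite{dvc2000} underlying Lemma \ref{le:symqubit}(iii) there are $B_3,B_4,B_5\in GL(\C^2)$ with $(B_3\ox B_4\ox B_5)\cY=\cW$. The factorwise element $A_1\ox A_2\ox B_3\ox B_4\ox B_5\in GL(\C^2)^{\times 5}$ carries $X\ox\cY$ to $\cG(2,2)\ox\cW$ and preserves tensor rank, so Theorem \ref{rankG2W} with $k=2$ gives $\rank(X\ox\cY)=\rank(\cG(2,2)\ox\cW)=6$, which is item 1. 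The same group element carries any rank-six decomposition of $X\ox\cY$ to one of $\cG(2,2)\ox\cW$, and since it acts factorwise it sends each family $\{\cBc_{j,p}\ox\cBc_{j,q}\ox\cBc_{j,r}\}_{j=1}^6$ to its image under the invertible map $A_p\ox A_q\ox A_r$, preserving linear (in)dependence; hence for item 2 I may assume $X=\cG(2,2)$, $\cY=\cW$, with factors $1,2$ from $X$ and factors $3,4,5$ from $\cW$.

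For the independence statement ($p\in\{1,2\}$, $3\le q<r\le 5$) I would argue by contradiction via the substitution method. Put $\{a,b\}=\{1,2,3,4,5\}\setminus\{p,q,r\}$, so $a$ is the remaining $X$-factor and $b$ the remaining $\cW$-factor, and set $T_j=\cBc_{j,p}\ox\cBc_{j,q}\ox\cBc_{j,r}$. Assuming a relation $\sum_{j=1}^6\lambda_j T_j=0$ with $\lambda\neq0$ and (say) $\lambda_6\neq0$, I substitute $T_6=-\lambda_6^{-1}\sum_{j<6}\lambda_j T_j$ into $X\ox\cY=\sum_{j=1}^6\cBc_{j,a}\ox\cBc_{j,b}\ox T_j$ to obtain
\[
X\ox\cY=\sum_{j=1}^5 N_j\ox T_j,\qquad N_j=\cBc_{j,a}\ox\cBc_{j,b}-\lambda_6^{-1}\lambda_j\,\cBc_{6,a}\ox\cBc_{6,b}\in\C^2\ox\C^2,
\]
a five-term expression whose matrices $N_j$ have rank at most two. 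The aim is to force $\rank(X\ox\cY)<6$, contradicting item 1. Slicing the $X$-factor $a$ collapses each slice to $\ket{s}_p\ox\cW$, which has rank $3$ by Lemma \ref{le:prod} and pins the factor-$p$ components $\cBc_{j,p}$; feeding this back, together with Lemmas \ref{le:a11} and \ref{le:symqubit}, should force enough of the $N_j$ to have rank one that the total number of rank-one tensors drops below six. I expect this rank control of the merged matrices $N_j$, i.e. the exclusion of every degenerate configuration of the vectors $\cBc_{j,i}$, to be the main obstacle.

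For the dependence statement (factors $\{1,2,i\}$, $i\in\{3,4,5\}$) the complementary pair $\{a,b\}=\{3,4,5\}\setminus\{i\}$ consists of two $\cW$-factors, and I would exploit that contracting them collapses $\cW$. For any functionals $g,h$ on factors $a,b$ the contraction of $X\ox\cY=X\ox\cW$ equals $X\ox\mathbf{w}(g,h)$ for some $\mathbf{w}(g,h)\in\C^2$, hence lies in the fixed two-dimensional space $V=\lin\{X\ox\ket{0}_i,\,X\ox\ket{1}_i\}$, which contains no nonzero rank-one tensor because $\rank X=2$. Expressing the same contraction through the decomposition gives $\sum_j g(\cBc_{j,a})h(\cBc_{j,b})\,\cBc_{j,1}\ox\cBc_{j,2}\ox\cBc_{j,i}\in V$. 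Writing $L:\C^6\to\C^8$, $L(\lambda)=\sum_j\lambda_j\,\cBc_{j,1}\ox\cBc_{j,2}\ox\cBc_{j,i}$, and letting $\Lambda\subseteq\C^6$ be the span of the bilinear coefficient vectors $(g(\cBc_{j,a})h(\cBc_{j,b}))_j$, this says $L(\Lambda)\subseteq V$, so $\dim\ker L\ge\dim\Lambda-2$ and
\[
\dim\lin\{\cBc_{j,1}\ox\cBc_{j,2}\ox\cBc_{j,i}\}_{j=1}^6=6-\dim\ker L\le 8-\dim\Lambda.
\]
Since no $\cW$-factor degenerates, the two coefficient row-spaces are two-dimensional and a short check gives $\dim\Lambda\ge 3$ for their Hadamard product, so the dimension is at most $5$ and the six projected tensors are linearly dependent.

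Finally, I expect the genuinely hard part to be the independence claim rather than the dependence claim: a single flattening of $X\ox\cY$ onto the three factors $\{p,q,r\}$ only certifies dimension $4$, so the remaining two dimensions must be extracted from the exact equality $\rank(X\ox\cY)=6$ through the substitution method, and the delicate point is the exhaustive control of the degenerate cases for the matrices $N_j$, equivalently the configurations of the vectors $\cBc_{j,i}$, using Lemmas \ref{le:a11} and \ref{le:symqubit}.
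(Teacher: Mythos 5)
Your item 1 and your reduction of item 2 to $X=\cG(2,2)$, $\cY=\cW$ coincide with the paper. The dependence half of item 2 is also essentially the paper's argument in a different dress, but the one quantitative input you need, $\dim\Lambda\ge 3$, does not follow from the reason you give. Since product functionals $g\ox h$ span the full dual of $\C^2\ox\C^2$, your $\Lambda$ is exactly the row space of the $6\times 4$ evaluation matrix, i.e. $\dim\Lambda=\dim\lin\{\cBc_{j,a}\ox\cBc_{j,b}\}_{j=1}^6$; and two $2$-dimensional coefficient row spaces can perfectly well have a $2$-dimensional Hadamard product --- take all $\cBc_{j,a}=\cBc_{j,b}\in\{\be_1,\be_2\}$ with both values occurring, so that every product of a row from the $a$-matrix with a row from the $b$-matrix lies in $\lin\{\be_1\ox\be_1,\be_2\ox\be_2\}$. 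What actually excludes this is $\rank\ \cW=3$: contracting the factors $1,2,i$ shows that $\lin\{\cBc_{j,a}\ox\cBc_{j,b}\}_j$ contains the two slices $\ket{01}+\ket{10}$ and $\ket{00}$ of $\cW$, and since it is spanned by rank-one matrices, Lemma \ref{charrnakten} forces its dimension to be at least $3$. This is precisely the intermediate claim the paper proves with a functional $\psi$ vanishing on a putative two-dimensional containing space; without it your inequality only gives the useless bound $6$.

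The independence half is the genuinely hard part and your proposal does not prove it. After the substitution $T_6=-\lambda_6^{-1}\sum_{j<6}\lambda_jT_j$ you obtain $\sum_{j=1}^5N_j\ox T_j$ with $\rank\ N_j\le 2$, which only bounds the rank by $10$; to contradict $\rank(X\ox\cY)=6$ you would have to drive essentially every $N_j$ down to rank one, and you explicitly leave this ``rank control'' as an unresolved obstacle. The paper takes a different and complete route: for $(p,q,r)=(2,4,5)$ it regroups the given decomposition as $\cT=\sum_{j=1}^6(\cBc_{j,1}\ox\cBc_{j,3})\ox(\cBc_{j,2}\ox\cBc_{j,4})\ox\cBc_{j,5}\in\C^4\ox\C^4\ox\C^2$, which is $\cG(2,2)\ox_K\cW$ and hence still has rank $6$ by Theorem \ref{rankG2W}; Lemma \ref{charrnakten} then says that a subspace of $\C^4\ox\C^2$ spanned by rank-one tensors and containing the four slices of $\cT$ has dimension at least $6$, and the six rank-one tensors $(\cBc_{j,2}\ox\cBc_{j,4})\ox\cBc_{j,5}$ span such a subspace, so they must be independent; permuting the two $X$-factors and the three $\cW$-factors gives all cases. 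Your closing observation that a single flattening only certifies dimension $4$ is correct, and the missing idea is exactly that the extra two dimensions come from the rank of the Kronecker product $\cG(2,2)\ox_K\cW$, not from the substitution method.
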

\begin{proof}
\emph{1} Clearly, $X$ is equivalent to $\cG(2,2)$ and it is well known that $\cY$ is equivalent to $\cW$.  Hence $X\otimes \cY$ is equivalent to $\cG(2,2)\otimes \cW$.  Theorem \eqref{rankG2W} yields that $\rank\ \cG(2,2)\otimes \cW=6$.   Hence $\rank(X\ox \cY)=6$.

\emph{2} Without loss of generality we may assume that $X=\cG(2,2)$ and $\cY=\cW$.
Let $\phi:\C^2\to \C$ be a nonzero linear functional.  A straightforward calculation shows that $\tilde\phi(\cG(2,2))\ne 0$ and $\tilde\phi(\cW)\ne 0$.  Furthermore, $\rank\ \tilde\phi(\cW)=1$ if and only if $\phi(\ket{0})=0$.

Next we observe that the set $\{\cBc_{1,i}\ldots,\cBc_{6,i}\}$ contain two independent vectors for each $i\in[5]$.  Suppose to the contrary that $\cBc_{1,1}\propto\cdots\propto\cBc_{6,1}$.
Then there exists a nonzero linear functional $\phi:\C^2\to\C$ such that $\phi(\cBc_{j,1})=0$ for $j\in[6]$.  This would imply that $\tilde\phi(\cG(2,2))\otimes \cW=0$.  Hence $\tilde\phi(\cG(2,2))=0$, which is impossible.  Therefore $\{\cBc_{1,1}\ldots,\cBc_{6,1}\}$ contains two independent vectors. View $X\otimes \cY$ as a $5$-tensor on $\otimes_{i=1}^5 \bU_i$, where each $\bU_i=\C^2$.  Iinterchange the two factors $U_1$ and $U_2$ to deduce that 
$\{\cBc_{1,2}\ldots,\cBc_{6,2}\}$ contains two independent vectors.  
By considering $\cY\otimes \cG(2,2)$ and using the fact that $\tilde\phi(\cW)\ne 0$ for any nonzero functional $\phi$ we deduce that $\{\cBc_{1,3}\ldots,\cBc_{6,3}\}$ contain two independent vectors. By interchanging $\bU_3$ with $\bU_i$ for $i=4,5$ we deduce that $\{\cBc_{1,i}\ldots,\cBc_{6,i}\}$ contain two independent vectors for $i=4,5$.
 Thus we showed that the set $\{\cBc_{1,i}\ldots,\cBc_{6,i}\}$ contains two independent vectors for each $i\in[5]$. 

Next we observe that the set $\{\cBc_{1,4}\otimes\cBc_{1,5},\ldots,\cBc_{6,4}\otimes\cBc_{6,5}\}$ contains $3$-linearly independent rank one matrices.  Assume to the contrary that there are two linearly independent rank one matrices $\bu\otimes\bv$ and $\bx\otimes \by$ in $\C^2\otimes\C^2$ whose span contains $\{\cBc_{1,4}\otimes\cBc_{1,5},\ldots,\cBc_{6,4}\otimes\cBc_{6,5}\}$. Let
$\psi:\C^2\otimes\C^2\to \C$ be a nonzero linear functional such that $\psi(\bu\otimes\bv)=\psi(\bx\otimes\by)=0$.  Hence $\tilde\psi(\cW)\otimes \cG(2,2)=0$, which implies that $\tilde\psi(\cW)=0$.  This condition is equivalent to $\psi(\ket{01}+\ket{10})=\psi(\ket{00})=0$.  Since $\psi$ was any nonzero linear functiona that vanishes on $\bu\otimes\bv$ and $\bx\otimes \by$, it follows that the two matrices $\ket{01}+\ket{10}$ and $\ket{00}$ are linear combinations of $\bu\otimes\bv$ and $\bx\otimes \by$.  Lemma \ref{charrnakten} yields that $\rank\ \cW\le 2$ which is false.  Hence $\{\cBc_{1,4}\otimes\cBc_{1,5},\ldots,\cBc_{6,4}\otimes\cBc_{6,5}\}$
 contain three linearly independent rank one matrices. By permuting accordingly the factors $\bU_3$, $\bU_4$ and $\bU_5$ we deduce that $\{\cBc_{1,q}\otimes\cBc_{1,r},\ldots,\cBc_{6,q}\otimes\cBc_{6,r}\}$ contain three linearly independent rank one matrices for $p\ne r$ and $p,r\in\{4,3,5\}$.

We now show that the six $3$-tensors $ \{\cBc_{1,1}\ox \cBc_{1,2}\ox\cBc_{1,3},\ldots,\cBc_{6,1}\ox \cBc_{6,2}\ox\cBc_{6,3}\}$ are linearly dependent.
 View the tensor $\cG(2,2)\otimes \cW$ as an $8\times 4$ matrix,denoted by $Z$, by grouping the first three factors and the last $2$ factors: $(\bU_1\otimes \bU_2\otimes \bU_3)\otimes (\bU_4\otimes \bU_5)$.
 Then
 \[\cG(2,2)\otimes \cW=\cG(2,2)\otimes (\ket{001} +\ket{010}+\ket{100})=(\cG(2,2)\otimes\ket{0})\otimes(\ket{01}+\ket{10})++
 (\cG(2,2)\otimes\ket{1})\otimes\ket{00}.\]
 Hence $\rank\ Z=2$.
On the other hand \eqref{eq:xy} states $Z=\sum^6_{j=1}(\cBc_{j,1}\ox \cBc_{j,2}\ox \cBc_{j,3})\ox (\cBc_{j,4}\ox \cBc_{j,5})$.  
Assume to the contrary that the six $3$-tensors  $\cBc_{j,1}\ox \cBc_{j.2}\ox \cBc_{j,3},j=1,\ldots,6$ are linearly independent.  Then the span of the row space of $Z$ is the span of the six  matrices $ \cBc_{j,4}\ox \cBc_{j,5},j=1,\ldots,6$, which is at least three dimensional.  Hence $\rank\ Z\ge 3$, which contradicts the previous equality $\rank\ Z= 2$.
 
We now show that the six $3$-tensors $\{\cBc_{1,2}\ox \cBc_{1,4}\ox\cBc_{1,5},\ldots,\cBc_{6,2}\ox \cBc_{6,4}\ox\cBc_{6,5}\}$ are linearly independent.
 Let $\cT=[t_{p,q,r}]\in \C^4\otimes \C^4\otimes \C^2$ be given by $\cT=\sum_{j=1}^6 (\cBc_{j,1}\otimes \cBc_{j,3})\otimes (\cBc_{j,2}\otimes \cBc_{j,4})\otimes \cBc_{j,5}$.
Theorem \ref{rankG2W} yields that $\rank\ \cT=6$.  Let $T_{p}=[t_{p,q,r}]_{q=r=1}^{4,2}\in\C^{4\times 2}, p\in[4]$ be the four frontal sections of $\cT$.  Lemma \ref{charrnakten} yields that the rank of $\cT$ is dimension of the minimal subspace in $\C^{4\times 2}$ spanned by rank one matrices that contains $T_1,\ldots,T_4$.  Clearly, the subspace spanned by six rank one matrices $(\cBc_{j,2}\otimes \cBc_{j,4})\otimes \cBc_{j,5}, j\in [6]$ contains $T_1,\ldots,T_4$.  Since $\rank\ \cT=6$ it follows these six rank one matrices are linearly independent. 
By permuting the factors $\bU_1,\bU_2$ and $\bU_3,\bU_4,\bU_5$ we deduce that
$\{\cBc_{1,p}\ox \cBc_{1,q}\ox\cBc_{1,r},\ldots,\cBc_{6,p}\ox \cBc_{6,q}\ox\cBc_{6,r}\}$ are linearly independent for $p=1,2$ and $3\le q<r \le 5$.
This completes the proof of the proposition.	
\end{proof}

\begin{theorem}
\label{thm:w2}
The 6-tensor $\cW\otimes \cW\in \bU_1\ox \bU_2\ox \bU_3\ox \bU_4\ox \bU_5\ox \bU_6$, where $\bU_i=\C^2$ for $i\in[6]$,  has rank eight.
\end{theorem}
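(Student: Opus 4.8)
Since the upper bound $\rank(\cW\otimes\cW)\le 8$ is already available from \cite{cjz17}, the plan is to establish the matching lower bound $\rank(\cW\otimes\cW)\ge 8$, arguing by contradiction from an assumed decomposition into seven rank-one terms $\cW\otimes\cW=\sum_{j=1}^{7}\otimes_{i=1}^{6}\cBc_{j,i}$ with $\cBc_{j,i}\in\C^2$. Writing $\cW_{123}=\ket0_1\otimes(\ket{01}+\ket{10})_{23}+\ket1_1\otimes\ket{00}_{23}$, I would first isolate the two slices of $\cW\otimes\cW$ along $\bU_1$, namely $\cS_0=(\ket{01}+\ket{10})_{23}\otimes\cW_{456}$ and $\cS_1=\ket{00}_{23}\otimes\cW_{456}$. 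By Lemma \ref{le:prod} and Proposition \ref{pp:2x3=6}(1) these satisfy $\rank\cS_1=3$ and $\rank\cS_0=6$. Setting $\cBc_{j,1}=a_j\ket0+b_j\ket1$ and $\cZ_j=\otimes_{i=2}^{6}\cBc_{j,i}$, one has $\cS_0=\sum_j a_j\cZ_j$ and $\cS_1=\sum_j b_j\cZ_j$; since a rank-$r$ tensor needs at least $r$ rank-one summands, the index sets $B=\{j:b_j\neq0\}$ and $A=\{j:a_j\neq0\}$ obey $|B|\ge3$ and $|A|\ge6$, whence $|A\cap B|\ge2$.

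The core device is the substitution method applied along $\bU_1$. For each $j_0\in B$ I would put $c=a_{j_0}/b_{j_0}$ and form $\cS_0-c\cS_1=\sum_{j\neq j_0}(a_j-cb_j)\cZ_j$. A direct computation gives $\cS_0-c\cS_1=\bv\otimes\cW_{456}$ with $\bv=(\ket{01}+\ket{10})-c\ket{00}$, a $2\times2$ matrix of determinant $-1$, so $\rank\bv=2$ and Theorem \ref{rankG2W} yields $\rank(\bv\otimes\cW)=6$. Hence $\{\cZ_j:j\neq j_0\}$ is an honest rank-six decomposition of $\bv\otimes\cW_{456}$ (the same reasoning applied unconditionally already reproves $\rank(\cW\otimes\cW)\ge7$). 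I would then invoke Proposition \ref{pp:2x3=6}(2) on this decomposition, with $X=\bv$ on $\bU_2\otimes\bU_3$ and $\cY=\cW$ on $\bU_4\otimes\bU_5\otimes\bU_6$: for every $j_0\in B$ the six triples $\cBc_{j,p}\otimes\cBc_{j,q}\otimes\cBc_{j,r}$ ($j\neq j_0$) are linearly independent when $p\in\{2,3\}$ and $\{q,r\}\subset\{4,5,6\}$, and linearly dependent when $\{p,q\}=\{2,3\}$ and $r\in\{4,5,6\}$.

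To reach a contradiction I would play these (in)dependence constraints against the rigid form of $\cS_1=\ket0_2\otimes\ket0_3\otimes\cW_{456}$. Contracting $\cS_1$ in factor $2$ against $\bra1$ annihilates it, giving $\sum_{j\in B}b_j\braket{1}{\cBc_{j,2}}\,\cBc_{j,3}\otimes\cBc_{j,4}\otimes\cBc_{j,5}\otimes\cBc_{j,6}=0$, and symmetrically for factor $3$. Using the independence of the triples $\cBc_{j,3}\otimes\cBc_{j,4}\otimes\cBc_{j,5}$ (and its lift in factor $6$ via Lemma \ref{le:prod}) I would argue these relations are forced to be trivial, so that $\cBc_{j,2}\propto\ket0$ and $\cBc_{j,3}\propto\ket0$ for $j\in B$; but for the (at least two) indices in $A\cap B$ this collides with the companion slice $\cS_0=(\ket{01}+\ket{10})_{23}\otimes\cW_{456}$, whose factor-$(2,3)$ part has no $\ket{00}$ component. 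Indeed, contracting $\cS_0$ against $\bra{00}$ on $\bU_2\otimes\bU_3$ produces a nontrivial linear relation among the tensors $\cBc_{j,4}\otimes\cBc_{j,5}\otimes\cBc_{j,6}$, contradicting the independence supplied by Proposition \ref{pp:2x3=6}(2).

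The main obstacle is exactly this last step. The independence from Proposition \ref{pp:2x3=6} holds only for the six-element index sets $[7]\setminus\{j_0\}$ with $j_0\in B$, not for all seven terms simultaneously, so the conversion from six-term (in)dependence to seven-term, and then to factor-wise proportionality, is delicate. I expect the genuinely hard case to be $|B|\ge4$, where $\cS_1=\sum_{j\in B}b_j\cZ_j$ is a non-minimal decomposition of a rank-three tensor and Lemma \ref{le:prod} no longer pins the factors down; there I would need to combine the constraints across several choices of $j_0$, use Lemma \ref{le:a11} to propagate proportionalities between rank-one terms, and apply Lemma \ref{le:symqubit}(iii) to recognise the residual symmetric three-qubit tensor as $\cW$ and close the argument. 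The substitution reduction together with Proposition \ref{pp:2x3=6} do the heavy lifting that turns the six-dimensional obstruction into the missing eighth dimension; the remaining effort is the combinatorial bookkeeping of the factors.
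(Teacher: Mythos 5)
Your opening moves match the paper's: the upper bound from \cite{cjz17}, a hypothetical seven-term decomposition, and the substitution method along one factor of the first $\cW$ block, which produces a six-term decomposition of a tensor of the form $X\ox\cW$ with $\rank\ X=2$ and puts Proposition \ref{pp:2x3=6} in play. Up to that point you are on the paper's track. The endgame, however, does not close, and the specific route you sketch contains an error. Your final contradiction rests on ``the independence supplied by Proposition \ref{pp:2x3=6}(2)'' for the six triples $\cBc_{j,4}\ox\cBc_{j,5}\ox\cBc_{j,6}$, i.e.\ all three indices taken from the second $\cW$ block. Proposition \ref{pp:2x3=6}(2) asserts independence only for triples with exactly one index from the $X$ side ($p\in\{1,2\}$, $3\le q<r\le 5$ in the paper's labelling); it says nothing about triples drawn entirely from the $\cY$ side, and indeed for the product decomposition of $\cG(2,2)\ox\cW$ those six tensors span only a three-dimensional space, so the independence you need is simply false in general. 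The other half of your endgame --- forcing $\cBc_{j,2}\propto\ket{0}$ and $\cBc_{j,3}\propto\ket{0}$ for $j\in B$ from the vanishing contractions of $\cS_1$ --- also does not follow, because the linear relations you obtain run over all of $B$ (up to seven indices), while the independence from Proposition \ref{pp:2x3=6} is only available for the six-element sets $[7]\setminus\{j_0\}$; you acknowledge this yourself as the ``genuinely hard case $|B|\ge 4$'' and leave it open.

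The missing idea is to exploit the symmetry between the two $\cW$ blocks rather than to keep slicing along $\bU_1$. The paper's dichotomy is: for each $j\in[7]$, either $\ba_{j,1}\propto\ba_{j,2}\propto\ba_{j,3}\propto\ket{0}$ or $\ba_{j,4}\propto\ba_{j,5}\propto\ba_{j,6}\propto\ket{0}$. If it failed for some term, one could choose (after permuting within each block) functionals $\phi_1$ and $\phi_4$ annihilating $\ba_{7,1}$ and $\ba_{7,4}$, each yielding a rank-two matrix $\tilde\phi(\cW)$ and hence a six-term decomposition of a rank-six tensor $X\ox\cW$ --- once with the surviving $X$ factors in the first block, once with them in the second. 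Proposition \ref{pp:2x3=6} then declares the very same family $\{\ba_{j,3}\ox\ba_{j,5}\ox\ba_{j,6}\}_{j\in[6]}$ linearly \emph{independent} from the first application (it has one factor on the $X$ side) and linearly \emph{dependent} from the second (there it is of the type $\cBc_{j,1}\ox\cBc_{j,2}\ox\cBc_{j,i}$): an immediate contradiction, with no case analysis on $|B|$. The dichotomy then forces $\cW^{\ox 2}=(\ox^3\ket{0})\ox\cA+\cB\ox(\ox^3\ket{0})$, of rank at most six, contradicting $\rank\ \cW\ox\cW\ge\rank\ \cW\ox_K\cW=7$. Without this two-sided use of the substitution method your argument, as written, does not terminate.
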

\begin{proof}
Consider the tensor $\cW\otimes_K\cW$,  which is a 3-vector in $(\bU_1\ox \bU_4)\ox(\bU_2\ox \bU_5)\ox(\bU_3\ox \bU_6)$.  Then $\rank \cW\otimes_K\cW=7$  \cite{ycg10,cds10}.  
So the $6$-vector $W^{\ox 2}$
has rank at least seven. On the other hand $\cW\otimes\cW$ has rank at most eight \cite{cjz17}. So the assertion holds if we can disprove that $\cW\otimes\cW$  \red{has} rank seven. 

Assume to the contrary that 
\begin{equation}
\label{W2rank7}	
\cW\otimes \cW=\sum^7_{j=1}\otimes_{i=1}^6 \ba_{j,i}.
\end{equation}
We claim that for each $j\in[7]$ either $\ba_{j,1}\propto\ba_{j,2}\propto\ba_{j,3}\propto\ket{0}$ or
$\ba_{j,4}\propto\ba_{j,5}\propto\ba_{j,6}\propto\ket{0}$.  Assume that this claim does not hold.  Then by rearranging the seven summands in \eqref{W2rank7}	and permuting the factors $\bU_1,\bU_2,\bU_3$ and $\bU_4,\bU_5,\bU_6$ we can assume that neither $\ba_{7,1}$ nor $\ba_{7,4}$ are proportional to $\ket{0}$.  Let $\phi_i:\C^2\to \C$ be a nonzero linear functional such that $\phi_i(\ba_{7,i})=0$ for $i\in [6]$.  Hence $\tilde\phi_1(\cW)$ and $\tilde\phi_4(\cW)$ are rank two matrices. The equality \eqref{W2rank7} yields
\[\tilde\phi_1(\cW)\otimes \cW=\sum^6_{j=1}\phi_1(\ba_{j,1})\otimes_{i=2}^6 \ba_{j,i}.\]
Since $\tilde\phi_1(W)$ has rank two, Proposition \ref{pp:2x3=6} yields the following facts.  First,  $\rank\ \tilde\phi_1(\cW)\otimes \cW=6$. Hence $\phi_1(\ba_{j,1})\ne 0$ for $j\in[6]$.  \red{Second, the six rank one tensors
$\ba_{j,3}\otimes \ba_{j,5}\otimes \ba_{j,6}$ are linearly independent for $j\in [6]$.  (This choice correspond to the choice $p=2, q=4, r=5$ in Proposition \ref{pp:2x3=6}.)} 
 Swap the factors $\bU_i$ with $\bU_{i+3}$ for $i\in [3]$ in \eqref{W2rank7} to deduce 
\[\cW\otimes \cW=\sum^7_{j=1}(\otimes_{i=4}^6 \ba_{j,i})\otimes (\otimes_{i=1}^3 \ba_{j,i}).\] 
Therefore
\[\tilde\phi_4(\cW)\otimes \cW=\sum^7_{j=1}(\phi_4(\ba_{j,4})\otimes_{i=5}^6 \ba_{j,i})\otimes (\otimes_{i=1}^3 \ba_{j,i}).\] 
As $\tilde\phi_4(W)$ has rank two, Proposition \ref{pp:2x3=6} yields the following facts.
First, $\phi_4(\ba_{j,4})\ne 0$ for $j\in[6]$.  Second the six rank one tensors $\ba_{j,5}\otimes\ba_{j,6}\otimes \ba_{j,3},j\in[6]$ are linearly dependent. \red{(This choice corresponds the choice $i=5$ in Proposition \ref{pp:2x3=6}.)  Therefore} the six rank one tensors $\ba_{j,3}\otimes\ba_{j,5}\otimes\ba_{j,6} ,j\in[6]$ are linearly dependent.  This contradicts the previous claim that  the six rank one tensors
\red{$\ba_{j,3}\otimes \ba_{j,5}\otimes \ba_{j,6}$ are linearly independent for $j\in [6]$.}    Hence for each $j\in[7]$ either $\ba_{j,1}\propto\ba_{j,2}\propto\ba_{j,3}\propto\ket{0}$ or
$\ba_{j,4}\propto\ba_{j,5}\propto\ba_{j,6}\propto\ket{0}$. 

Clearly, we can't have that $\ba_{j,1}\propto\ba_{j,2}\propto\ba_{j,3}\propto\ket{0}$ for $j\in[7]$.  Otherwise $\cW\otimes \cW=(\otimes^3\ket{0})\otimes \cA$ for some $\cA\in\otimes^3\C^2$.  Lemma \ref{le:prod} yields that $\rank\ ((\otimes^3\ket{0})\otimes \cA)=\rank\ \cA\le 3 $ which contradicts the inequality $\rank\ \cW\otimes \cW \ge 7$.  Similarly, we can't have that $\ba_{j,4}\propto\ba_{j,5}\propto\ba_{j,6}\propto\ket{0}$ for $j\in[7]$.   Therefore $\cW\otimes \cW=(\otimes^3\ket{0})\otimes \cA+\cB\otimes (\otimes^3\ket{0})$, where $\cB\in\otimes^3\C^2$.  Hence 
\[\rank\ \cW\otimes \cW\le \rank\ ((\otimes^3\ket{0})\otimes \cA)+ \rank\ (\cB\otimes (\otimes^3\ket{0})\le 6.\]
This contradict the inequality $\rank\ \cW\otimes \cW \ge 7$. Hence $\rank\ \cW\otimes \cW=8$.
\end{proof}

\section{Estimating the rank of $\cW^{\ox n}$}
\label{sec:wn}

In this section we estimate the rank of $\cW^{\ox n}$ with $n>2$. 
It was shown by Zuiddam \cite{Zuiddam2017A} that $\rank\ \otimes_K^3\cW=16$.  Hence $\rank\ \cW^{\otimes 3}\ge 16$.
It has been mentioned in \cite[Remark 14]{cjz17} that $\rank\ \cW^{\ox 3}\le21$. 
The following theorem improves on the above upper bound:
\begin{theorem}
\label{ubrankw3}
\begin{equation}\label{brnkw3}
16\le \rank\ \cW^{\otimes 3}\le 20.
\end{equation}
\end{theorem}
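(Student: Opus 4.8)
The plan is to treat the two inequalities separately; the lower bound is immediate and essentially all the content sits in the upper bound. For the left inequality I would simply invoke the two facts already assembled in the text: Zuiddam's equality $\rank\ \otimes_K^3\cW=16$ together with the first chain in \eqref{powrankin}, namely $\rank\ \otimes^p_K\cW\le\rank\ \otimes^p\cW$ specialized to $p=3$, yields $\rank\ \cW^{\otimes 3}\ge\rank\ \otimes_K^3\cW=16$. Nothing further is needed here.

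For the upper bound the goal is to exhibit an explicit decomposition of the $9$-qubit tensor $\cW^{\otimes 3}$ into $20$ unentangled summands, improving the bound $21$ of \cite{cjz17}. First I would record why the obvious approaches stall at $24$. Starting from a rank eight decomposition $\cW^{\otimes 2}=\sum_{j=1}^8\cR_j$ (Theorem \ref{thm:w2}) and writing $\cW^{\otimes 3}=\sum_{j=1}^8\cR_j\otimes\cW$, each summand has rank three by Lemma \ref{le:prod}, for a total of $8\cdot3=24$. Slicing one copy as $\cW=\ket0\otimes(\ket{01}+\ket{10})+\ket1\otimes\ket{00}$ gives $\cW^{\otimes 3}=\ket0\otimes P+\ket1\otimes Q$ with $P=(\ket{01}+\ket{10})\otimes\cW^{\otimes 2}$ and $Q=\ket{00}\otimes\cW^{\otimes 2}$ on the other eight qubits; by Lemma \ref{le:prod} one has $\rank\ Q=8$ and $\rank\ P\le 2\cdot 8=16$, again $24$. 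Moreover the GHZ-type building blocks are rigidly multiplicative: Theorem \ref{rankG2W} gives $\rank\ (\cG(k,2)\otimes\cW)=3k$ and Corollary \ref{JaTcor} gives $\rank\ (\cG(m,3)\otimes\cW^{\otimes 2})=8m$, so grouping a rank two slice with a single $\cW$ (rank six, also by Proposition \ref{pp:2x3=6}) or appending a GHZ yields no saving. Consequently any improvement below $24$, and in particular the extra saving of one term below $21$, must come from a construction that does not respect the factorization into the three copies.

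Accordingly I would build the decomposition at the level of individual qubits. Using the slice $\cW^{\otimes 3}=\ket0\otimes P+\ket1\otimes Q$ above, the rank is at most the number of common rank one tensors needed to express $P$ and $Q$ simultaneously: if $P=\sum_i c_i\cR_i$ and $Q=\sum_i d_i\cR_i$ share one family $\{\cR_i\}$, then $\cW^{\otimes 3}=\sum_i(c_i\ket0+d_i\ket1)\otimes\cR_i$. The task is therefore to produce a single family of $20$ genuinely (fine) rank one tensors spanning both $P$ and $Q$, intermixing the qubits of the three copies inside the product vectors so that terms are shared between the two slices. Equivalently, one may extract $\cW^{\otimes 3}$ as the coefficient of $s^1t^1u^1$ in $(\ket0+s\ket1)^{\otimes3}\otimes(\ket0+t\ket1)^{\otimes3}\otimes(\ket0+u\ket1)^{\otimes3}$ and design the interpolation so that evaluation nodes are shared across the three variables, then adjoin the minimal set of correction terms that make the finite construction exact; driving the count of nodes plus corrections down to $20$ is the objective. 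In either formulation the final verification that the chosen $20$ product vectors sum to $\cW^{\otimes 3}$ is a finite linear check.

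The main obstacle is precisely this fineness requirement. All the exact machinery available in the paper — the pencil and Kronecker canonical form computation behind Theorem \ref{rankG2W}, Kruskal's theorem, and Zuiddam's value $16$ — bounds tensor rank only from below, since coarsening a partition can only decrease rank; none of it manufactures product vectors. Hence the upper bound cannot be deduced formally from the earlier results and must be an honest explicit decomposition, whose only nontrivial point is arranging the cross-copy interference so that exactly one rank one term is saved against the barrier at $21$. I expect essentially all of the work, and the only place creativity is required, to lie in selecting these $20$ product vectors; once they are fixed, correctness is routine.
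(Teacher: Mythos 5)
Your lower bound argument matches the paper exactly: $\rank\ \cW^{\otimes 3}\ge \rank\ \otimes_K^3\cW=16$ via \eqref{powrankin} and Zuiddam's result, and nothing more is needed there. The problem is the upper bound: you correctly diagnose that the naive routes all stall at $24$ and that an explicit $20$-term construction is required, but you never produce one. Your proposal ends by describing two candidate strategies (sharing rank one terms between the slices $P$ and $Q$, or polynomial interpolation in $(\ket0+s\ket1)^{\otimes 3}\otimes\cdots$) and conceding that "essentially all of the work" lies in selecting the $20$ product vectors. That selection is precisely the content of the theorem, so as written the upper bound is unproved.

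The idea you are missing is not a list of $20$ individual product vectors but a coarser regrouping trick. The paper first shows (Lemma \ref{rank1Wpert} and Corollary \ref{srankonepertW}) that for suitable rank one tensors $\cB$ one has $\rank(\cW+t\cB)=2$ for all $t\ne 0$; writing $\cW=\cX_i+\cB_i$ with $\rank\ \cX_i=2$ and $\rank\ \cB_i=1$, it then regroups the eight cross terms of $\otimes_{i=1}^3(\cX_i+\cB_i)$ into the four blocks of \eqref{eq:w32}, using free parameters $y,z$ (with $y,z\ne 0,1$ and $y\ne z$) so that every factor of the form $\cX_i+t\cB_i=\cW+(t-1)\cB_i$ again has rank two. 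Submultiplicativity of rank over $\otimes$ then bounds the four blocks by $2\cdot2\cdot2=8$, $1\cdot2\cdot2=4$, $2\cdot2\cdot1=4$, $2\cdot1\cdot2=4$, giving $20$; the identity itself is the routine linear check you anticipated. So the "cross-copy interference" you were looking for is achieved by a partial-fraction-style recombination of the binomial expansion, not by hand-picking product vectors, and without that device (or an equivalent one) your proof of the upper bound has a genuine gap. As a smaller point, your heuristic claim that no formal deduction from earlier results can give the bound is overstated: the paper's argument is exactly such a deduction, resting on Lemma \ref{rank1Wpert} plus rank submultiplicativity.
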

We first recall well known characterization \cite{Ja1978Optimal}: 
\begin{lemma}\label{rank3tens} Let $\cT\in\otimes^3\C^2$.  Assume that $\be_1=(1,0)\trans=\ket{0},\be_2=(0,1)\trans=\ket{1}$ is a standard basis in $\C^2$.  Set $\cT= \be_1\otimes T_1+\be_2\otimes T_2$, where $T_1,T_2\in\C^{2\times 2}$.  Then $\rank\ \cT=3$ if and only if span$(T_1,T_2)$ is two dimensional and spanned by $A,B\in\C^{2\times 2}$, where $A$ is invertible and $A^{-1}B$ is a nondiagonizable matrix.
\end{lemma}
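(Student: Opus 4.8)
The plan is to run everything through Lemma \ref{charrnakten}: grouping the first factor, $\rank\ \cT$ equals the minimal dimension of a subspace of $\C^2\otimes\C^2=\C^{2\times 2}$ that is spanned by rank-one matrices and contains $\bW:=\lin(T_1,T_2)$. So the whole statement reduces to counting the rank-one matrices (up to scaling) inside the pencil $\bW$. The tool for this count is the binary form $\det(\lambda T_1+\mu T_2)$: a nonzero element $\lambda T_1+\mu T_2$ of $\bW$ has rank $\le 1$ precisely when it is singular, i.e. when $[\lambda:\mu]$ is a root of this form, and it then has rank exactly $1$ (it is nonzero whenever $\dim\bW=2$, since $\lambda T_1+\mu T_2=0$ would force $T_1\propto T_2$). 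Thus the number of distinct rank-one directions in $\bW$ is the number of distinct roots of $\det(\lambda T_1+\mu T_2)$. When $\bW$ contains an invertible $A$ and we take a basis $\{A,B\}$, this form factors as $\det(\lambda A+\mu B)=\det(A)(\lambda+\mu\mu_1)(\lambda+\mu\mu_2)$, where $\mu_1,\mu_2$ are the eigenvalues of $C:=A^{-1}B$. Hence there are two distinct rank-one directions exactly when $\mu_1\ne\mu_2$, and a single one (a double root) exactly when $\mu_1=\mu_2$. Since $\dim\bW=2$ rules out $C=\mu I$ (that would give $B\propto A$), the dichotomy $\mu_1\ne\mu_2$ versus $\mu_1=\mu_2$ coincides with $C$ diagonalizable versus $C$ nondiagonalizable.

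For the forward direction, I would assume $\rank\ \cT=3$ and extract the three asserted properties. First, $\dim\bW=2$: if $\dim\bW\le 1$ then $\bW=\lin(M)$ and, writing $M$ (of rank $\le 2$) as a sum of at most two rank-one matrices, Lemma \ref{charrnakten} gives $\rank\ \cT\le 2$, a contradiction. Second, $\bW$ contains an invertible matrix: otherwise $\det(\lambda T_1+\mu T_2)\equiv 0$, so every nonzero element of $\bW$ has rank one, any two independent elements are rank-one and span $\bW$, and Lemma \ref{charrnakten} again yields $\rank\ \cT=2$. So choose a basis $\{A,B\}$ of $\bW$ with $A$ invertible. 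Third, $C=A^{-1}B$ is nondiagonalizable: if it were diagonalizable, then (as $\dim\bW=2$) $\mu_1\ne\mu_2$, so the two roots produce two linearly independent rank-one matrices in $\bW$; these span the $2$-dimensional $\bW$, and Lemma \ref{charrnakten} forces $\rank\ \cT=2$, contradiction.

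For the converse, I would start from a basis $\{A,B\}$ with $A$ invertible and $C=A^{-1}B$ nondiagonalizable, hence $\mu_1=\mu_2$ and $\det(\lambda A+\mu B)$ has a double root, so $\bW$ contains exactly one rank-one direction. Then no $2$-dimensional rank-one-spanned subspace can contain $\bW$: such a subspace would equal $\bW$ (same dimension), forcing $\bW$ itself to be spanned by rank-one matrices and thus to contain two independent rank-one directions, a contradiction. By Lemma \ref{charrnakten} this gives $\rank\ \cT\ge 3$. For the matching upper bound, take the unique rank-one $R\in\bW$, complete to a basis $\{R,M\}$ of $\bW$ (so $M$ has rank two), and write $M=R_1+R_2$ with $R_1,R_2$ rank one; the space $\lin(R,R_1,R_2)$ is spanned by three rank-one matrices and contains $\bW=\lin(R,M)$, so Lemma \ref{charrnakten} gives $\rank\ \cT\le 3$, whence $\rank\ \cT=3$. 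The only real friction I anticipate is the bookkeeping in the determinant factorization, namely verifying that the root count of $\det(\lambda A+\mu B)$ matches the Jordan type of $C$ uniformly across the degenerate subcases (e.g. when $B$ is singular, so one eigenvalue vanishes and one root sits at $[\lambda:\mu]=[1:0]$); this is the step that must be checked carefully, but it is a short direct computation rather than a genuine obstacle.
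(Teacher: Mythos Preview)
Your argument is correct. The paper does not actually prove Lemma~\ref{rank3tens}; it merely quotes the result as a well-known characterization with a reference to \cite{Ja1978Optimal}, so there is no ``paper's own proof'' to compare against. Your route via Lemma~\ref{charrnakten}, reducing the question to the number of rank-one directions in the pencil $\bW=\lin(T_1,T_2)$ and reading that number off from the roots of the binary form $\det(\lambda A+\mu B)=\det(A)\det(\lambda I+\mu C)$, is a clean and self-contained justification of the cited fact. One small correction in your final caveat: when $B$ is singular the vanishing eigenvalue $\mu_i=0$ corresponds to the root $[\lambda:\mu]=[0:1]$, not $[1:0]$ (since $\det(A)\ne 0$); this does not affect the argument.
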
 
\begin{lemma}\label{rank1Wpert}  Let $\cB\in\otimes^3\C^2$ be a rank one tensor.
Consider the one parameter family of tensors $\cW+t\cB$ for $t\in\C$.  Then  for a random choice of $\cB$ the rank of $\cW+t\cB$  is two unless $t\in\{0,t_1\}$.  (For $t=0$ and $t=t_1\ne 0$ the rank of  $\cW+t\cB$ is three.)  In particular
\begin{enumerate}
\item For $\cB=\be_1\otimes\be_2\otimes \be_2$ the rank of $\cW+t\cB$ is two for $t\ne 0$.
\item For $\cB=\be_1\otimes\be_1\otimes\be_1$ the rank of $\cW+t\cB$ is three for
all $t\in\C$.
\item For $\cB\in\{\be_1\otimes\be_1\otimes\be_2,\be_1\otimes\be_2\otimes\be_1\}$ the rank of $\cW+t\cB$ is three for all $t\ne -1$.  The rank of $\cW-\cB$ is two.
\item For $\cB=\be_2\otimes \bx\otimes \by, \bx=(x_1,x_2)\trans\ne 0, \by=(y_1,y_2)\trans$, the tensor $\cW+t\cB$ has rank two for $t\not\in\{0, t_1\}$, except in the following cases: 
\begin{enumerate} 
\item If $x_1x_2y_1y_2\ne 0$ and $x_2y_1+x_1y_2=0$ then $\cW+t\cB$ has rank two for $t\ne 0$.
\item If $x_1=y_1y_2=0$ then $\cW+t\cB$ has rank two for $t\ne 0$.
\item $x_2=0$.  If $y_2\ne 0$ then $\cW+t\cB$ has rank two for $t\ne 0$.  If $y_2=0$ then $\cW+t\cB$ has rank three for $t\ne t_1$.
\item If $y_1=x_1x_2=0$ then $\cW+t\cB$ has rank two for $t\ne 0$.
\item $y_2=0$. If $x_2\ne 0$ then $\cW+t\cB$ has rank two for $t\ne 0$.  If $x_2=0$ then $\cW+t\cB$ has rank three for $t\ne t_1$.
\end{enumerate} 
\end{enumerate}
\end{lemma}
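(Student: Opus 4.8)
The plan is to work entirely with the two frontal slices of the $3$-qubit tensor relative to the first factor and to read off the rank from the pencil they span, via Lemma \ref{rank3tens}. Writing $\cW=\be_1\ox W_1+\be_2\ox W_2$ with $W_1=\begin{bmatrix}0&1\\1&0\end{bmatrix}$ and $W_2=\begin{bmatrix}1&0\\0&0\end{bmatrix}$, and the rank one perturbation as $\cB=\bu\ox\bx\ox\by$ with $\bu=(u_1,u_2)\trans$, $\bx=(x_1,x_2)\trans$, $\by=(y_1,y_2)\trans$, the slices of $\cW+t\cB$ are $T_1=W_1+tu_1\,\bx\by\trans$ and $T_2=W_2+tu_2\,\bx\by\trans$. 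Recall that for a tensor in $\ox^3\C^2$ the rank is controlled by the binary form $q(\l,\m)=\det(\l T_1+\m T_2)$: a rank one tensor forces $q\equiv0$, the rank is two precisely when the discriminant (Cayley hyperdeterminant) $\Delta$ of $q$ is nonzero (since a $2\times2\times2$ tensor has rank at most three and a rank one tensor has $\Delta=0$), and when $\Delta=0$ with $T_1,T_2$ independent the rank is three exactly under the non-diagonalizability criterion of Lemma \ref{rank3tens}. Thus the whole statement reduces to computing $\Delta$ as a function of $t$ and, at its roots, checking the Jordan type of the pencil.

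The key computation is short. With $M=\l W_1+\m W_2=\begin{bmatrix}\m&\l\\\l&0\end{bmatrix}$ one has $\det M=-\l^2$ and $\mathrm{adj}\,M=\begin{bmatrix}0&-\l\\-\l&\m\end{bmatrix}$, so the matrix determinant lemma gives
\[
q(\l,\m)=-\l^2+t(\l u_1+\m u_2)\big(-\l(x_2y_1+x_1y_2)+\m\,x_2y_2\big).
\]
Setting $R=x_2y_1+x_1y_2$ and $S=x_2y_2$, the coefficients are $a=-(1+tu_1R)$, $b=t(u_1S-u_2R)$, $c=tu_2S$, and a one-line simplification collapses the $t^2$-coefficient of $b^2-4ac$ into a perfect square, yielding the clean factorization
\[
\Delta(t)=t\big[(u_1S+u_2R)^2\,t+4u_2S\big].
\]
Hence $\Delta$ vanishes only at $t=0$ and at $t_1=-4u_2S/(u_1S+u_2R)^2$, provided the leading coefficient is nonzero. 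For a generic $\cB$ one has $u_1S+u_2R\ne0$ and $u_2S\ne0$, so $t_1\ne0$ and $\Delta$ has exactly the two simple roots $0,t_1$; off these the rank is two, while at each root the pencil is generically a single Jordan block, so Lemma \ref{rank3tens} gives rank three. This proves the opening assertion.

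The enumerated items are the specializations in which this generic picture degenerates, and each is settled by substituting the given $\bu,\bx,\by$ into the displayed factorization and, where necessary, applying Lemma \ref{rank3tens} to explicit slices. For item 1 ($S=1,u_2=0$) the formula gives $\Delta(t)=t^2$, so the rank is two for $t\ne0$; for items 2 and 3 (both with $u_2=0$ and $S=0$) the leading coefficient and $4u_2S$ both vanish, so $\Delta\equiv0$, and one checks directly that $A=T_1$ is invertible with $A^{-1}T_2$ nilpotent, hence rank three, for all admissible $t$, the single exceptional value in item 3 being where $T_1$ drops rank and the tensor factors as a fixed vector times a rank two matrix, so Lemma \ref{le:prod} forces rank two. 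Item 4 is the general $\bu=\be_2$ case, where $\Delta(t)=t[R^2t+4S]$, and the subcases (a)--(e) are exactly the coordinate patterns making $R=0$, $S=0$, or both; in each the displayed formula locates the roots while Lemma \ref{rank3tens} decides two versus three. The main obstacle is precisely this last bookkeeping: once a coordinate vanishes, $\Delta$ can collapse identically to zero or lose a root, so one can no longer rely on ``$\Delta\ne0\Rightarrow$ rank two'' and must instead verify the non-diagonalizability of the pencil by hand to separate the rank-three behaviour from the rank-two collapse that occurs when a slice degenerates at $t_1$.
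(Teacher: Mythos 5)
Your proposal is correct and follows essentially the same route as the paper: both reduce $\cW+t\cB$ to the pencil of frontal slices $T_1(t),T_2(t)$ and invoke the criterion of Lemma \ref{rank3tens}, the paper via the double-eigenvalue condition on $W_3^{-1}W_2(t)$ (a linear equation in $s=1/t$, derived under the restriction $u_1u_2\ne 0$), and you via the closed-form discriminant $\Delta(t)=t\left[(u_1S+u_2R)^2t+4u_2S\right]$, whose derivation and perfect-square collapse I have checked. One wording correction: ``the rank is two precisely when $\Delta\ne 0$'' should be only the implication $\Delta\ne 0\Rightarrow \text{rank two}$, since rank-two tensors with $\Delta=0$ do occur (your own item 3 exhibits one: $\cW-\be_1\ox\be_1\ox\be_2$ factors as a rank-two matrix times a vector); because you only ever use that one direction and settle every degenerate $\Delta=0$ case by a direct Jordan-form check, the argument itself is unaffected.
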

\begin{proof}  Let
\[\cW=\be_1\otimes W_1+\be_{2}\otimes W_2, \quad W_1=\left[\begin{array}{cc}0&1\\1&0\end{array}\right], W_2=\left[\begin{array}{cc}1&0\\0&0\end{array}\right].\]
Assume that $\cB=\bu\otimes \bx\otimes \by$ where $\bu=(u_1,u_2)\trans$ and $\bx,\by$ as above.  So
$\cW+t\cB=\be_1\otimes W_1(t)+\be_2\otimes W_2(t)$, where $W_1(t)=W_1+tu_1\bx\by\trans, W_2(t)=W_2+tu_2\bx\by\trans$.  Assume that $u_1u_2\ne 0$.  Then 
\[W_3:=u_2W_1-u_1W_2=u_2W_1(t)-u_1W_2(t)=\left[\begin{array}{cc}-u_1&u_2\\u_2&0\end{array}\right], W_3^{-1}=u_2^{-2}\left[\begin{array}{cc}0&u_2\\u_2&u_1\end{array}\right].\]
Assume that $t\ne 0$ and let $s=\frac{1}{t}$.  Define 
\[W_4(s)=\frac{u_2^2}{t}W_3^{-1}W_2(t)=\left[\begin{array}{cc}x_1'y_1&x_1'y_2\\su_2+x_2'y_1&x_2'y_2\end{array}\right], (x_1',x_2')\trans=u_2^2W_3^{-1}\bx.\]
The condition that $W_4(s)$ has a double eigenvalue is $($trace $W_4(s))^2=4$det $W_4$. For $\cB$ chosen at random, this will give a linear equaition in $s$ whose solution is $s\ne 0$.  So $t_1=\frac{1}{s}$.  As for random $\cB$ $x_1'y_2\ne 0$ it follows that $\cW+t_1\cB$ has rank $3$ and for $t\not\in\{0,t_1\}$ the tensor $\cW+t\cB$ has rank two.  Other claims of the lemma follow straightforward using
Lemma \ref{rank3tens} and the above arguments.
\end{proof}

\begin{corollary}\label{srankonepertW}
Assume that $\cB\in\otimes^3\C^2$ is a rank one tensor proportional to one of the tensors $\be_2\otimes\be_2\otimes\be_2, \be_2\otimes\be_2\otimes\be_1,\be_2\otimes\be_1\otimes\be_2, \be_1\otimes\be_2\otimes\be_2$.  Then $\cW+t\cB$ has rank two for $t\ne 0$.
\end{corollary}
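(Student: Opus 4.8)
The plan is to read off Corollary \ref{srankonepertW} directly from Lemma \ref{rank1Wpert}, since each of the four listed tensors is already one of the perturbations analyzed there. First I would note that replacing $\cB$ by a nonzero scalar multiple $c\cB$ only reparametrizes the family, because $\cW+t(c\cB)=\cW+(tc)\cB$ and $tc$ ranges over all nonzero complex numbers as $t$ does; hence the rank of $\cW+t\cB$ for $t\ne 0$ depends only on the proportionality class of $\cB$, and I may assume $\cB$ equals one of the four stated tensors exactly.

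Next I would dispatch the four cases. The tensor $\be_1\ox\be_2\ox\be_2$ is handled immediately by part (1) of Lemma \ref{rank1Wpert}, which asserts rank two for $t\ne 0$. The remaining three tensors all have the form $\cB=\be_2\ox\bx\ox\by$ and so fall under part (4) of the lemma; for each I would simply record the coordinates $\bx=(x_1,x_2)\trans,\by=(y_1,y_2)\trans$ and identify the governing subcase. For $\be_2\ox\be_2\ox\be_2$ one has $x_1=0$ and $y_1y_2=0$, so subcase (b) applies and gives rank two for $t\ne 0$; for $\be_2\ox\be_2\ox\be_1$ one again has $x_1=0$ and $y_1y_2=0$, so subcase (b) applies; for $\be_2\ox\be_1\ox\be_2$ one has $x_2=0$ with $y_2\ne 0$, so subcase (c) applies and again yields rank two for $t\ne 0$.

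There is no real obstacle here: the content of the corollary is entirely contained in Lemma \ref{rank1Wpert}, and the only work is the bookkeeping of matching each of the four specific rank-one tensors to the appropriate clause of that lemma and verifying the (trivial) defining conditions $x_1=y_1y_2=0$ or $x_2=0,\,y_2\ne0$. Since every relevant subcase concludes with ``rank two for $t\ne 0$,'' the corollary follows at once.
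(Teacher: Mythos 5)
Your proposal is correct and is exactly the argument the paper intends: the corollary is stated without proof precisely because it is read off from Lemma \ref{rank1Wpert}, with part (1) covering $\be_1\otimes\be_2\otimes\be_2$ and subcases (b) and (c) of part (4) covering the three tensors of the form $\be_2\otimes\bx\otimes\by$, together with the observation that rescaling $\cB$ only reparametrizes $t$. Your case bookkeeping checks out.
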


\textbf{Proof of Theorem \ref{ubrankw3}}.\begin{proof}
Let $\cB_1,\cB_2,\cB_3$ be rank one $3$-tensors of the form given by Corollary \ref{srankonepertW}.  Set $\cX_i=\cW-\cB_i$ for $i\in[3]$.  Then $\rank\ \cX_i=2$ for $i\in[3]$.  Observe next  that
\begin{eqnarray}
\label{eq:w32}
\cW^{\ox 3}=\otimes_{i=1}^3(\cX_i+\cB_i)=
&=&
\cX_1
\ox
(\cX_2+y\cB_2)
\ox	
(\cX_3+{1\over y}\cB_3)
\notag\\
&+&
\cB_1
\ox
(\cX_2+z\cB_2)
\ox	
(\cX_3+{1\over z}\cB_3)
\notag\\
&+&
\bigg(
(1-{1\over y})\cX_1+
(1-{1\over z})\cB_1
\bigg)\ox \cX_2 \ox \cB_3
\notag\\
&+&
\bigg(
(1-y)\cX_1+
(1-z)\cB_1
\bigg)\ox \cB_2 \ox \cX_3,
\end{eqnarray}
where the complex numbers $y,z\ne 0,1$, $y\ne z$,
So the four terms in \eqref{eq:w32} respectively have rank $8,4,4$ and $4$. 
The lower bound in \eqref{brnkw3} follows from \cite{Zuiddam2017A}.
We have proved our theorem.
\end{proof} 

Using Theorems \ref{thm:w2} and \ref{ubrankw3} we obtain that
\begin{eqnarray}
\label{eq:3m}
&&
\rank\ \cW^{\ox 3m}\le 20^m,
\\
\label{eq:3m+1}
&& 
\rank\ \cW^{\ox (3m+1)}\le 3\cdot 20^m,
\\ 	
\label{eq:3m+2}
&&
\rank\ \cW^{\ox (3m+2)}\le 8\cdot 20^m,
\end{eqnarray}
for any positive integer $m$. These equations give the upper bound of $\cW^{\ox n}$ for any positive integer $n$. On the other hand, a lower bound of $\rank \ox^n \cW$ is known as $\rank \ox_K^n \cW \ge 2^{n+1}-1$ \cite[Theorem 8]{ccd2010}.
It has been proved in the proof of Proposition 12 in \cite{cjz17} that $\rank\ \cW^{\ox n} \le (2n+1)2^n $. This upper bound is worse than  \eqref{eq:3m}-\eqref{eq:3m+2} for $n\in\{3,\ldots,9\}$ and better than  \eqref{eq:3m}-\eqref{eq:3m+2} for $n\ge 10$.   
\red{Using the above resuls we deduce that 
\[2(2n+1)^{\frac{1}{n}}\ge( \rank\ \ox^n\cW)^{\frac{1}{n}}\ge (\rank \ox_K^n \cW)^{\frac{1}{n}}\ge (2^{n+1}-1)^{\frac{1}{n}}.\]
Letting $n\to\infty$ we obtain}
\begin{equation}
\label{eq:lim}	
\lim_{n\rightarrow\infty} (\rank\ \ox^n\cW)^{1/n} =
\lim_{n\rightarrow\infty} (\rank\ \ox^n_K\cW )^{1/n} =2.
\end{equation}
In particular, the asymptotic rank is bounded above by border rank. This result has been also derived in \cite{Strassen1988The}.

Theorem \ref{ubrankw3}  shows that $8=\rank\ \otimes^2\cW> \rank\ \otimes_K^2\cW=7$.  Hence it is possible to assume that $\rank\ \otimes^3\cW>\rank\ \otimes_K^3\cW=16$.  The following lemma implies the above conjectured inequality under the following condition':
\begin{lemma}
\label{le:w3} If the 8-tensor $\cG(2,2)\otimes \cW^{\otimes 2}$ has rank 16, then $\rank\ \cW^{\ox3}>16$.
\end{lemma}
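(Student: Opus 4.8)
The plan is to prove the contrapositive together with the multiplicativity structure already established. Suppose that $\rank\ \cW^{\otimes 3}=16$; I would show this forces $\rank\ \cG(2,2)\otimes \cW^{\otimes 2}<16$, which is the contrapositive of the stated implication. The key observation connecting the two quantities is Lemma \ref{krestrlem} (and Corollary \ref{JaTcor}), which relate the rank of a Kronecker product with $\cG(k,d)$ to the $k$-fold direct sum rank. Specifically, since $\cG(2,3)$ has a factor of dimension $2$, the relevant apparatus should let me view $\cG(2,2)\otimes \cW^{\otimes 2}$ as essentially two copies of $\cW^{\otimes 2}$ glued by the $\cG(2,2)$ structure, so that its rank is sandwiched between $\rank\ \cW^{\otimes 2}=8$ and $2\cdot\rank\ \cW^{\otimes 2}=16$.

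First I would make the identification $\cW^{\otimes 3}=\cW\otimes \cW^{\otimes 2}$ and recall that $\cW$ itself is, up to equivalence, a rank-three tensor built from $\cG$-type pieces. The concrete bridge is to write $\cW$ in a form that exposes a $\cG(2,\cdot)$ block plus a rank-one correction, using the decomposition $\cW=\be_1\otimes W_1+\be_2\otimes W_2$ appearing in Lemma \ref{rank3tens}. Tensoring with $\cW^{\otimes 2}$ and regrouping, I expect $\cW^{\otimes 3}$ to decompose as a sum of a term of the shape $\cG(2,2)\otimes \cW^{\otimes 2}$ (supplying the ``two copies'' of $\cW^{\otimes 2}$) plus lower-rank correction terms analogous to those manipulated in the proof of Theorem \ref{ubrankw3} (equation \eqref{eq:w32}). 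The arithmetic would then read: if $\rank\ \cG(2,2)\otimes \cW^{\otimes 2}$ were $16$, one could combine it with the cheap correction terms to build $\cW^{\otimes 3}$ in at most $16$ summands, hence $\rank\ \cW^{\otimes 3}\le 16$; combined with the known lower bound $\rank\ \cW^{\otimes 3}\ge 16$ from \cite{Zuiddam2017A}, this yields equality. Reversing the logic gives exactly the stated conditional.

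The main obstacle I anticipate is controlling the correction terms precisely enough that they contribute zero extra rank to the bound. The subtlety is that decomposing $\cW$ to expose a $\cG(2,2)$ factor introduces rank-one perturbations of $\cW^{\otimes 2}$, and I need each such perturbed copy to have rank at most that of $\cW^{\otimes 2}$ itself — this is exactly the kind of phenomenon quantified in Lemma \ref{rank1Wpert} and Corollary \ref{srankonepertW}, where suitable rank-one perturbations of $\cW$ drop the rank from three to two. I would therefore choose the perturbing rank-one tensors from the privileged list in Corollary \ref{srankonepertW} so that the correction copies of $\cW$ have rank two rather than three, keeping the total summand count controlled. Making the bookkeeping exact — so that the number of unentangled terms telescopes to at most $16$ rather than merely ``close to'' $16$ — is where the genuine work lies, and it is essentially a careful replay of the regrouping computation \eqref{eq:w32} with one factor replaced by the hypothesized optimal decomposition of $\cG(2,2)\otimes \cW^{\otimes 2}$.

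A cleaner alternative, which I would try first, is to invoke Lemma \ref{krestrlem} directly: that lemma states $\rank\ (\oplus^2 \cW^{\otimes 2})=2\cdot\rank\ \cW^{\otimes 2}$ is \emph{equivalent} to $\rank\ (\cG(2,\cdot)\otimes_K \cW^{\otimes 2})=16$. Since $\cW^{\otimes 3}$ can be realized as a Kronecker-type contraction sitting below the ordinary tensor product via \eqref{powrankin}, and since $\cG(2,2)\otimes \cW^{\otimes 2}$ is the ordinary-tensor-product analogue of this direct-sum quantity, the failure of Strassen's direct sum equality for $\cW^{\otimes 2}$ (namely $\rank\ (\oplus^2\cW^{\otimes 2})<16$) would propagate to $\rank\ \cW^{\otimes 3}<2\cdot 8=16$, contradicting the lower bound; hence if instead $\cG(2,2)\otimes \cW^{\otimes 2}$ attains rank $16$, the only remaining possibility consistent with the lower bound is $\rank\ \cW^{\otimes 3}>16$. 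I expect the actual proof to hinge on pinning down exactly this equivalence, so the decisive step is relating $\cW^{\otimes 3}$ to the eight-tensor $\cG(2,2)\otimes \cW^{\otimes 2}$ through the embedding of the Kronecker product into the full tensor product.
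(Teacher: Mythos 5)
Your argument runs in the wrong direction and, as written, would prove the negation of the lemma. The inequality your main construction delivers has the shape $\rank\ \cW^{\otimes 3}\le \rank\ (\cG(2,2)\otimes\cW^{\otimes 2})+(\text{corrections})$: you build $\cW^{\otimes 3}$ \emph{out of} a decomposition of $\cG(2,2)\otimes\cW^{\otimes 2}$, so the hypothesis $\rank\ \cG(2,2)\otimes\cW^{\otimes 2}=16$ can only yield an \emph{upper} bound on $\rank\ \cW^{\otimes 3}$ of $16$ plus whatever the corrections cost (and the claim that a $16$-term block combined with extra correction terms gives ``at most $16$ summands'' is already arithmetically impossible). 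Even granting the arithmetic, you would conclude $\rank\ \cW^{\otimes 3}=16$ from the hypothesis, which is precisely what the lemma denies; ``reversing the logic'' does not convert this into the stated conditional. What the lemma actually requires is an inequality pointing the other way, namely $\rank\ (\cG(2,2)\otimes\cW^{\otimes 2})\le \rank\ \cW^{\otimes 3}-1$, i.e.\ a way to pass from a putative $16$-term decomposition of $\cW^{\otimes 3}$ to a $15$-term decomposition of something equivalent to $\cG(2,2)\otimes\cW^{\otimes 2}$. Your ``cleaner alternative'' does not supply this either: Lemma \ref{krestrlem} concerns the Kronecker product of $\cG(k,d)$ with a tensor having the same number $d$ of factors, not the ordinary tensor product $\cG(2,2)\otimes\cW^{\otimes 2}$ (an $8$-tensor), and the concluding step (``the only remaining possibility is $>16$'') is a non sequitur, since nothing you derive excludes equality at $16$.

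The missing idea is the substitution method. Suppose $\cW^{\otimes 3}=\sum_{j=1}^{16}\otimes_{i=1}^{9}\ba_{j,i}$. Not all of the vectors $\ba_{j,1}$ can be proportional to a single vector (otherwise the first slot of $\cW^{\otimes 3}$ would be one-dimensional), so some $\ba_{j_0,1}$ is not proportional to $\ket{0}$. Choose a nonzero functional $\phi$ with $\phi(\ba_{j_0,1})=0$; then $\phi(\ket{0})\ne 0$, and contracting the first factor of $\cW=\ket{001}+\ket{010}+\ket{100}$ against $\phi$ produces a rank-two matrix, i.e.\ a tensor equivalent to $\cG(2,2)$. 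Applying this contraction to both sides of the assumed decomposition annihilates the $j_0$-th summand and exhibits a tensor equivalent to $\cG(2,2)\otimes\cW^{\otimes 2}$ as a sum of at most $15$ rank-one terms, contradicting the hypothesis that its rank is $16$. This one-step projection argument is the entire content of the paper's proof; no splitting of $\cW$ into a $\cG$-block plus rank-one perturbations, and no appeal to Corollary \ref{srankonepertW}, is needed.
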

\begin{proof}  Assume by contradiction that
\[\cW^{\otimes 3}=\sum_{j=1}^{16} \otimes_{i=1}^9  \ba_{j,i}, \quad \ba_{j,i}\in\C^2 \textrm{ for } j\in[16], i\in[9].\]
Clearly, it is impossible that all $\ba_{j,1}$ are proportional to $\be_2$.  Without loss of generality we can assume that $\ba_{16,1}\not\propto \be_2$.   Let $\phi:\C^2 \to \C$ 
be nonzero linear functional such that $\phi(\ba_{16,1})=0$.  Let $\phi_k:\otimes^k\C^2\to \otimes^{k-1}\C$ be the linear transformations induced by $\phi$ for $k=3, 9$.  Recall that $\rank\ \phi_3(\cW)=2$.  So $\phi_3(\cW)$ is equivalent to the matrix $\cG(2,2)$.  Thus we obtain 
\[\phi_3(\cW)\otimes\cW^{\otimes 2}=\sum_{j=1}^{15} \phi(\ba_{j,1})\otimes_{i=2}^9  \ba_{j,i}.\]
 This equality contradicts our assumption that $\rank\ \cG(2,2)\otimes \cW^{\otimes 2}=16$.	
\end{proof}

\section{Open problems}
\label{sec:openprob}
It seems that many known results as $\rank\ \otimes_K^2\cW=7, \rank\ \otimes_K^3\cW=16$ and Theorem \ref{rankG2W} follow from the fact that we have a good number of results on the rank of $3$-tensors.  In fact, the proof of Theorem \ref{thm:w2} follows from Theorem \ref{rankG2W}.  
In the following set of open problems we ask a number of open problems which are basically related to extension of our or known results we used.
\begin{opprob}\label{openproblems}
\begin{enumerate}
\item Let $\be_1=(1,0)\trans=\ket{0},\be_2=(0,1)\trans=\ket{1}$ be a standard basis in $\C^2$.  Denote by $\cW_n=\sum_{i=1}^n \otimes^{i-1}\be_1\otimes \be_2\otimes^{n-i}\be_1$ be a symmetric tensor in $\otimes^n\C^2$.  (It is known that $\rank\ \cW_n=n$ \cite[Theorem 3]{ccd2010}.)  Is it true that $\rank\ \cG(k,d)\otimes_K\cW_n=kn$?
\item View $\cW\otimes\cW$ as a $6$-tensor in $\otimes_{i=1}^6 \bU_i$.  Let $\cX\in \C^4\otimes (\otimes^4\C^2)$ and $\cY\in (\otimes^2\C^4)\otimes(\otimes^2\C^2)$ be  $\cW\otimes\cW$ viewed as tensor on $(\bU_1\otimes\bU_4)\otimes(\otimes_{j=3,5,6}\bU_j)$ and $(\bU_1\otimes\bU_4)\otimes (\bU_2\otimes\bU_5)\otimes \bU_3\otimes \bU_6$ respectively.   Hence
\[8=\rank\ \cW\otimes \cW\ge \rank\ \cX\ge \rank\ \cY\ge \rank \cW\otimes_K\cW=7.\]
Is it true that $ \rank\ \cY=8$?
\item Is it true that $\rank\ \cG(2,2)\otimes \cW\otimes\cW=16$?
\item Let $d$ be an integer greater than three. One can easily generalize Kruskal's theorem to $d$-tensors by viewing $d$-tensor as $3$-tensors by grouping the factors in $\otimes_{i=1}^d \C^{n_i}$.  See for example \cite{friedland16}.   Do there exists better generalizations?
\item Can one have good generalizations of \eqref{eq:w32} for $\cW^{\otimes n}$ for $n>3$?
\end{enumerate}
\end{opprob}

\section*{Acknowledgments}

We thank Joseph M. Landsberg, Jeroen Zuiddam and Karol $\dot{Z}$yckowski for useful conversations. \red{We thank the referee for useful comments.}
LC was supported by Beijing Natural Science Foundation (4173076), the NNSF of China (Grant No. 11501024), and the Fundamental Research Funds for the Central Universities (Grant Nos. KG12001101, ZG216S1760 and ZG226S17J6).

\bibliography{chenfried}

\end{document}